\let\coloneqq\relax
\let\eqqcolon\relax
\newcolumntype{x}[1]{>{\centering\arraybackslash}p{#1}}
\newtheorem{thm}{Theorem}
\newtheorem*{thm*}{Theorem}
\newtheorem{prop}[thm]{Proposition}
\newtheorem*{prop*}{Proposition}
\newtheorem{lemma}[thm]{Lemma}
\newtheorem*{lemma*}{Lemma}
\newtheorem{cor}[thm]{Corollary}
\newtheorem*{cor*}{Corollary}
\newtheorem{cj}[thm]{Conjecture}
\newtheorem*{cj*}{Conjecture}
\newtheorem*{Def*}{Definition}
\newtheorem*{question*}{Question}
\newtheorem*{problem*}{Problem}
\def\thmhead@plain#1#2#3{%
  \thmname{#1}\thmnumber{\@ifnotempty{#1}{ }\@upn{#2}}%
  \thmnote{ {\the\thm@notefont#3}}}
\let\thmhead\thmhead@plain
\theoremstyle{definition}
\newtheorem{rem}[thm]{Remark}
\newtheorem*{note}{Note}
\newcommand{\bb}{\begin{equation}\begin{aligned}\hspace{0pt}}
\newcommand{\bbb}{\begin{equation*}\begin{aligned}}
\newcommand{\ee}{\end{aligned}\end{equation}}
\newcommand{\eee}{\end{aligned}\end{equation*}}
\newcommand*{\coloneqq}{\mathrel{\vcenter{\baselineskip0.5ex \lineskiplimit0pt \hbox{\scriptsize.}\hbox{\scriptsize.}}} =}
\newcommand*{\eqqcolon}{= \mathrel{\vcenter{\baselineskip0.5ex \lineskiplimit0pt \hbox{\scriptsize.}\hbox{\scriptsize.}}}}
\newcommand\ceil[1]{\left\lceil#1\right\rceil}
\newcommand{\eqt}[1]{\stackrel{\mathclap{\scriptsize \mbox{#1}}}{=}}
\newcommand{\leqt}[1]{\stackrel{\mathclap{\scriptsize \mbox{#1}}}{\leq}}
\newcommand{\ketbra}[1]{\ket{#1}\!\!\bra{#1}}
\newcommand{\e}{\varepsilon}
\renewcommand{\epsilon}{\varepsilon}
\newcommand{\dd}{\mathrm{d}}
\newcommand{\id}{\mathds{1}}
\newcommand{\R}{\mathds{R}}
\newcommand{\N}{\mathds{N}}
\newcommand{\cptp}{\mathrm{CPTP}}
\newcommand{\locc}{\mathrm{LOCC}}
\newcommand{\SEP}{\pazocal{S}}
\newcommand{\sepp}{\mathrm{NE}}
\DeclareMathOperator{\Tr}{Tr}
\DeclareMathOperator{\cl}{cl}
\DeclareMathOperator{\co}{conv}
\DeclareMathAlphabet{\pazocal}{OMS}{zplm}{m}{n}
\newcommand{\HH}{\pazocal{H}}
\newcommand{\NN}{\pazocal{N}}
\newcommand{\MM}{\pazocal{M}}
\newcommand{\D}{\pazocal{D}}
\newcommand{\XX}{\pazocal{X}}
\newcommand{\FF}{\pazocal{F}}
\newcommand{\lsmatrix}{\left(\begin{smallmatrix}}
\newcommand{\rsmatrix}{\end{smallmatrix}\right)}
\newcommand{\deff}[1]{\textbf{\emph{#1}}}
\newcommand\xxrightarrow[2][]{\mathrel{%
  \setbox2=\hbox{\stackon{\scriptstyle#1}{\scriptstyle#2}}%
  \stackunder[5pt]{%
    \xrightarrow{\makebox[\dimexpr\wd2\relax]{$\scriptstyle#2$}}%
  }{%
   \scriptstyle#1\,%
  }%
}}
\newcommand{\ctends}[3]{\xxrightarrow[\raisebox{#3}{$\scriptstyle #2$}]{\raisebox{-0.7pt}{$\scriptstyle #1$}}}
\newcommand*\rel@kern[1]{\kern#1\dimexpr\macc@kerna}
\newcommand*\widebar[1]{%
  \begingroup
  \def\mathaccent##1##2{%
    \rel@kern{0.8}%
    \overline{\rel@kern{-0.8}\macc@nucleus\rel@kern{0.2}}%
    \rel@kern{-0.2}%
  }%
  \macc@depth\@ne
  \let\math@bgroup\@empty \let\math@egroup\macc@set@skewchar
  \mathsurround\z@ \frozen@everymath{\mathgroup\macc@group\relax}%
  \macc@set@skewchar\relax
  \let\mathaccentV\macc@nested@a
  \macc@nested@a\relax111{#1}%
  \endgroup
}
\tikzset{meter/.append style={draw, inner sep=10, rectangle, font=\vphantom{A}, minimum width=30, line width=.8, path picture={\draw[black] ([shift={(.1,.3)}]path picture bounding box.south west) to[bend left=50] ([shift={(-.1,.3)}]path picture bounding box.south east);\draw[black,-latex] ([shift={(0,.1)}]path picture bounding box.south) -- ([shift={(.3,-.1)}]path picture bounding box.north);}}}
\tikzset{roundnode/.append style={circle, draw=black, fill=gray!20, thick, minimum size=10mm}}
\tikzset{squarenode/.style={rectangle, draw=black, fill=none, thick, minimum size=10mm}}
\definecolor{Blues5seq1}{RGB}{239,243,255}
\definecolor{Blues5seq2}{RGB}{189,215,231}
\definecolor{Blues5seq3}{RGB}{107,174,214}
\definecolor{Blues5seq4}{RGB}{49,130,189}
\definecolor{Blues5seq5}{RGB}{8,81,156}
\definecolor{Greens5seq1}{RGB}{237,248,233}
\definecolor{Greens5seq2}{RGB}{186,228,179}
\definecolor{Greens5seq3}{RGB}{116,196,118}
\definecolor{Greens5seq4}{RGB}{49,163,84}
\definecolor{Greens5seq5}{RGB}{0,109,44}
\definecolor{Reds5seq1}{RGB}{254,229,217}
\definecolor{Reds5seq2}{RGB}{252,174,145}
\definecolor{Reds5seq3}{RGB}{251,106,74}
\definecolor{Reds5seq4}{RGB}{222,45,38}
\definecolor{Reds5seq5}{RGB}{165,15,21}
\newcommand{\LL}{\pazocal{L}}
\newcommand{\etilde}{\tilde{E}}
\begin{document}

\title{Continuity of entropies via integral representations}

\author{Mario Berta}
\affiliation{Institute for Quantum Information,
  RWTH Aachen University, Germany}

\author{Ludovico Lami}
\affiliation{QuSoft, Science Park 123, 1098 XG Amsterdam, The Netherlands}
\affiliation{Korteweg--de Vries Institute for Mathematics, University of Amsterdam, Science Park 105-107, 1098 XG Amsterdam, The Netherlands}
\affiliation{Institute for Theoretical Physics, University of Amsterdam, Science Park 904, 1098 XH Amsterdam, The Netherlands}

\author{Marco Tomamichel}
\affiliation{Center for Quantum Technologies, National University of Singapore}
\affiliation{Department of Electrical and Computer Engineering, College of Design and Engineering, National University of Singapore}

\begin{abstract}
We show that Frenkel's integral representation of the quantum relative entropy provides a natural framework to derive continuity bounds for quantum information measures. Our main general result is a dimension-independent semi-continuity relation for the quantum relative entropy with respect to the first argument. Using it, we obtain a number of results: (1)~a tight continuity relation for the conditional entropy in the case where the two states have equal marginals on the conditioning system, resolving a conjecture by Wilde in this special case; (2)~a stronger version of the Fannes--Audenaert inequality on quantum entropy; (3)~better estimates on the quantum capacity of approximately degradable channels; (4)~an improved continuity relation for the entanglement cost; (5)~general upper bounds on asymptotic transformation rates in infinite-dimensional entanglement theory; and (6)~a proof of a conjecture due to Christandl, Ferrara, and Lancien on the continuity of `filtered' relative entropy distances. 
\end{abstract}

\maketitle

\tableofcontents


\section{Introduction}

In a series of recent works, a new integral formula for Umegaki's quantum relative entropy was discovered. Originally found by Frenkel~\cite{Frenkel2023} and later refined by Jen\v{c}ova~\cite{Jencova2023} and Hirche and Tomamichel~\cite{Hirche2023}, it can be written as
\bb
D(\rho\|\sigma) = (\log e) \int_1^\infty \dd\gamma \left(\frac{1}{\gamma}\, E_\gamma(\rho\|\sigma) + \frac{1}{\gamma^2}\, E_\gamma(\sigma\|\rho)\right) .
\label{eq:integral_formula}
\ee
Here, Umegaki's quantum \deff{relative entropy}~\cite{Umegaki1962} of a quantum state $\rho$ with regards to another quantum state $\sigma$ is given as
\bb
D(\rho\|\sigma) \coloneqq \Tr \rho \left(\log \rho - \log \sigma\right) ,
\ee
assuming that the support $\rho$ is contained in the support $\sigma$ as otherwise the quantity is not finite. Moreover, for any given $\gamma \geq 1$, the corresponding \deff{hockey-stick divergence} is defined by
\bb
E_\gamma(\rho\|\sigma) \coloneqq \Tr \left(\rho - \gamma\sigma\right)_+\, ,
\ee
where $X_+ \coloneqq \sum_i \max\{x_i,0\} \ketbra{i}$ is the positive part of a Hermitian operator $X$ with spectral decomposition $X = \sum_i x_i \ketbra{i}$.

The quantum relative entropy can be seen as a parent quantity for quantum entropies, as other information measures such as the von Neumann entropy, the conditional entropy, and the mutual information can all be expressed in terms of it. The operational importance of the relative entropy, in turn, stems from the quantum Stein's lemma due to Hiai and Petz~\cite{Hiai1991}, which endows it with an operational interpretation in the context of quantum hypothesis testing. The above expression of the relative entropy in terms of hockey-stick divergences is particularly fruitful because the latter quantities, thanks to their operational significance in binary hypothesis testing, exhibit many useful mathematical properties and are thus comparatively easy to manipulate. 

Integral representations of quantum relative entropies have already been employed to derive continuity bounds (see~\cite{Audenaert2005} for an early example and~\cite[Section~5.2]{Hirche2023} for the most recent instance). 
One of our main goals is to derive fully quantum extensions of the continuity inequality for the Shannon entropy. Its tight version, due to Csisz\'ar, states that for two probability distributions $P_X,Q_X$ on a finite alphabet $\XX$ with $\frac{1}{2}\|P_X-Q_X\|_1\leq \e \leq 1-\frac{1}{|\XX|}$ in variational distance, one has that
\bb
\left|H(X)_P-H(X)_Q\right|\leq \e \log\left(|\XX| - 1\right) +h_2(\e)\,,
\label{eq:Csiszar}
\ee
where $H(X)_P\coloneqq-\sum_{x\in \XX} P(x)\log P(x)$ denotes the Shannon entropy, and 
\bb
  h_2(p) \coloneqq p \log \frac{1}{p} + (1-p) \log \frac{1}{1-p}
\ee
is the \deff{binary entropy function}. The above inequality admits a quantum generalisation due to Fannes~\cite{Fannes1973}, Audenaert~\cite{Audenaert2007}, and Petz~\cite[Theorem~3.8]{PETZ-QINFO}:
\bb
\left| S(\rho) - S(\sigma)\right| \leq \e \log(d-1) + h_2(\e)\, ,
\label{eq:Fannes_Audenaert_Petz}
\ee
where $S(\rho)\coloneqq - \Tr \rho \log \rho$, called the \deff{von Neumann entropy}, is the extension of the Shannon entropy to density operators $\rho$ on Hilbert spaces of finite dimension $d$, and $\e$, which is assumed to satisfy $\frac12 \|\rho-\sigma\|_1 \leq \e \leq 1 - \frac{1}{d}$, is a bound on the \deff{trace distance} between the density operators $\rho$ and $\sigma$, where $\|\rho-\sigma\|_1 \coloneqq \Tr |\rho-\sigma|$. Yet another generalisation of~\eqref{eq:Csiszar} is to conditional entropies: for any two probability distributions $P_{XY},Q_{XY}$ on a product alphabet $\XX\times \pazocal{Y}$, if $\frac{1}{2}\|P_{XY} - Q_{XY} \|_1\leq \e \leq 1-\frac{1}{|\XX|}$ then Alhejji and Smith proved that~\cite{Alhejji2019}
\bb
\left|H(X|Y)_P - H(X|Y)_Q\right|\leq \e \log\left(|\XX| - 1\right) +h_2(\e)\, ,
\label{eq:Alhejji_Smith}
\ee
where $H(X|Y) \coloneqq H(XY) - H(Y)$ is the conditional entropy. The above inequality was then extended by Wilde to the quantum-classical case where $X$, the conditioned system, is promoted to a quantum system~\cite{Mark2020}. Wilde also conjectured that a similar continuity relation should hold also for the fully quantum case~\cite[Eq.~(58)]{Mark2020}, namely that
\bb
\left| H(A|B)_\rho - H(A|B)_\sigma \right| \leqt{?} \e \log\big(|A|^2 - 1 \big) + h_2(\e)\, ,
\label{eq:Wilde_conjecture}
\ee
where $H(A|B)_\rho \coloneqq S(\rho_{AB}) - S(\rho_B)$ is the conditional entropy, $|A|$ is the dimension of the quantum system $A$, and $\frac12 \left\|\rho_{AB} - \sigma_{AB}\right\|_1 \leq \e \leq 1 - \frac{1}{|A|^2}$ is an estimate on the trace distance between the two states $\rho_{AB}$ and $\sigma_{AB}$.

Our intuition to unify all these different bounds and move towards a proof of~\eqref{eq:Wilde_conjecture} is to lift the problem from quantum entropies to the corresponding parent quantity, the relative entropy. Our main general result is Theorem~\ref{semicontinuity_relent_thm} below, which presents a semi-continuity bound of the form
\bb
D(\rho\|\omega) - D(\sigma\|\omega) \leq \e \log(M-1) + h_2(\e)\, ,
\label{semicontinuity_relent_informal}
\ee
where $\frac12 \|\rho-\sigma\|_1\leq \e \leq 1 - \frac{1}{M}$, and $M$ is any number such that the operator inequality $\rho \leq M \omega$, meaning that $M\omega - \rho$ is positive semi-definite, holds. Note that the left-hand side of~\eqref{semicontinuity_relent_informal} does not contain an absolute value. This feature, which makes ours a \emph{semi}-continuity bound rather than a plain continuity bound like~\eqref{eq:Csiszar}, is rather fundamental, as the expression on the left-hand side can happen to diverge to $-\infty$. Eq.~\eqref{semicontinuity_relent_informal} is in fact tight, in the sense that for all $M\geq 1$ and $\e\in \big[0,1-\tfrac1M\big]$ one can find a triple of states $\rho,\sigma,\omega$ obeying the above conditions and saturating~\eqref{semicontinuity_relent_informal}. Our proof of~\eqref{semicontinuity_relent_informal} uses the integral representation in~\eqref{eq:integral_formula} together with the properties of hockey-stick divergences. The conceptual novelty of our approach, therefore, is that it works directly at the level of quantum entropies, without first operating a reduction to the classical case, like all previous approaches to prove the Fannes--Audenaert inequality~\cite{Fannes1973, Audenaert2007, PETZ-QINFO}.

The formal resemblance between~\eqref{eq:Csiszar} and~\eqref{eq:Fannes_Audenaert_Petz}, on one side, and~\eqref{semicontinuity_relent_informal}, on the other, is clear. Indeed, as we will show, the latter inequality constitutes a strict generalisation of the former two. In fact, in Section~\ref{sec:improved_FAP} we show how to use our~\eqref{semicontinuity_relent_informal} to obtain an improved version of the Fannes--Audenaert inequality~\eqref{eq:Fannes_Audenaert_Petz} (Corollary~\ref{improved_Fannes_Audenaert_cor}). 

As an immediate consequence of~\eqref{semicontinuity_relent_informal}, in Section~\ref{sec:conditional_entropy} we also prove Wilde's conjecture~\eqref{eq:Wilde_conjecture} in the special case where $\rho_{AB}$ and $\sigma_{AB}$ have equal marginals on the conditioning system, in formula $\rho_B = \sigma_B$. 
Our inequality takes the form
\bb
\left| H(A|B)_\rho - H(A|B)_\sigma \right| \leq \e \log\left(|A|\min\{|A|,|B|\} - 1\right) + h_2(\e) \qquad \qquad (\rho_B = \sigma_B)\, ,
\label{continuity_conditional_entropy_informal}
\ee
where $\frac12 \left\|\rho_{AB} - \sigma_{AB}\right\|_1\leq \e \leq 1 - \left(|A| \min\{|A|,|B|\}\right)^{-1}$; it generalises the Fannes--Audenaert inequality~\eqref{eq:Fannes_Audenaert_Petz}, and it directly implies~\eqref{eq:Wilde_conjecture} in the case of equal marginals. Note that the case where $|B|\leq |A|$ is less interesting, as it follows immediately from~\eqref{eq:Fannes_Audenaert_Petz} applied to the global system $AB$. The case where the dimension of $B$ is larger than that of $A$, and perhaps even infinite, is, on the contrary, novel.

Remarkably, in proving~\eqref{continuity_conditional_entropy_informal} we cannot afford to restrict ourselves to the classical case, as one typically does to prove~\eqref{eq:Fannes_Audenaert_Petz}, because the conditional entropy, unlike the von Neumann entropy, is not invariant under (global) unitaries. It is here that our fully quantum methods grant us a decisive advantage.

While our solution of Wilde's conjecture covers only the case of equal marginals, it is precisely this case that turns out to have a plethora of fruitful applications throughout quantum information theory. Using~\eqref{continuity_conditional_entropy_informal}, we are able to derive continuity bounds on several fundamental quantities that are tighter than anything found in the prior literature. First, in Section~\ref{sec:continuity_entanglement_cost} we tackle the entanglement cost, a key entanglement measure that quantifies how much entanglement is needed to prepare a given quantum state in the asymptotic limit. Then, in Section~\ref{sec:approximate_degradability}, we employ once again~\eqref{continuity_conditional_entropy_informal} to derive the tightest available bounds on the magnitude of additivity violations in the quantum capacity of approximately degradable channels~\cite{Sutter-VV}. We continue by applying our general result~\eqref{semicontinuity_relent_informal} in a somewhat indirect way to resolve a conjecture by Christandl, Ferrara, and Lancien~\cite[Conjecture~7]{random-private} on a special form of continuity of certain relative-entropy-based resource quantifiers. Finally, in Section~\ref{sec:transformation_rates} we show how to employ the dimension-independent inequality~\eqref{semicontinuity_relent_informal} to constrain transformation rates in infinite-dimensional entanglement theory, by-passing known technical hurdles related to the failure of asymptotic continuity in infinite-dimensional systems, a phenomenon referred to as the `asymptotic continuity catastrophe' in~\cite{nonclassicality}.


\section{A semi-continuity relation for the relative entropy}

We now present our main technical result, which employs the integral formula in~\eqref{eq:integral_formula} to derive a new semi-continuity relation for the relative entropy with respect to the first argument, when the second one is fixed.

\begin{thm} \label{semicontinuity_relent_thm}
Let $\rho,\sigma,\omega$ be quantum states on the same separable Hilbert space such that
\bb
\frac12 \left\|\rho - \sigma\right\|_1 \leq \e\, ,\qquad \exp\left[D_{\max}(\rho\|\omega)\right] := \inf\{ m :\ \rho\leq m \omega \} \leq M\,
\label{semicontinuity_relent_thm_conditions_states}
\ee
for some $\e \in [0,1]$ and $M \geq 1$.
Then, we have that
\bb
D(\rho\|\omega) - D(\sigma\|\omega) \leq \left\{ \begin{array}{ll} \e \log \left(M-1\right) + h_2(\e) & \quad \text{ if $\e < 1-\frac1M$,} \\[2ex] \log M & \quad \text{ if $\e \geq 1-\frac1M$.} \end{array} \right.
\label{semicontinuity_relent}
\ee
Moreover, for every such $\e$ and $M$, there exists a triple of states $\rho, \sigma$ and $\omega$ obeying~\eqref{semicontinuity_relent_thm_conditions_states} for which the inequality~\eqref{semicontinuity_relent} is tight.
\end{thm}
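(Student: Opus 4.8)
The plan is to work entirely at the level of the relative entropy through the integral formula~\eqref{eq:integral_formula}, reducing the statement to a one-dimensional calculus problem for the hockey-stick curve of $\rho$ versus $\omega$. First I would dispose of the regime $\e \ge 1-\frac1M$: there the claimed bound is $\log M$, and this is immediate since $D(\rho\|\omega)\le D_{\max}(\rho\|\omega)\le \log M$ while $D(\sigma\|\omega)\ge 0$. It remains to treat $\e<1-\frac1M$ (and I may assume $D(\sigma\|\omega)<\infty$, else the left-hand side is $-\infty$). The key preliminary manipulation is to feed~\eqref{eq:integral_formula} into $D(\rho\|\omega)-D(\sigma\|\omega)$ and collapse the two resulting terms into one. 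Using the reciprocity identity $E_\gamma(\alpha\|\beta)=1-\gamma+\gamma\,E_{1/\gamma}(\beta\|\alpha)$ (which follows from $\Tr X_+=\Tr X+\Tr(-X)_+$) on the second term and substituting $\gamma\mapsto 1/\gamma$, the whole difference telescopes into the single integral
\bb
D(\rho\|\omega)-D(\sigma\|\omega) = (\log e)\int_0^\infty \frac{\dd\gamma}{\gamma}\,\Delta(\gamma)\,,\qquad \Delta(\gamma) \coloneqq E_\gamma(\rho\|\omega) - E_\gamma(\sigma\|\omega)\,,
\ee
now ranging over all $\gamma\in(0,\infty)$.

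Next I would establish three pointwise upper bounds on $\Delta(\gamma)$, all from the variational formula $\Tr X_+=\max_{0\le P\le\id}\Tr(PX)$. Evaluating the second positive part on the optimiser $P^\star$ of the first gives (i)~$\Delta(\gamma)\le\Tr P^\star(\rho-\sigma)\le\Tr(\rho-\sigma)_+\le\e$; dropping the non-negative term $E_\gamma(\sigma\|\omega)$ gives (ii)~$\Delta(\gamma)\le E_\gamma(\rho\|\omega)$; and rewriting $\Delta(\gamma)=\gamma\big[E_{1/\gamma}(\omega\|\rho)-E_{1/\gamma}(\omega\|\sigma)\big]$ by reciprocity, dropping $E_{1/\gamma}(\omega\|\sigma)\ge 0$, and undoing the identity gives (iii)~$\Delta(\gamma)\le E_\gamma(\rho\|\omega)+\gamma-1$. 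Writing $G(\gamma)\coloneqq E_\gamma(\rho\|\omega)=\max_{0\le P\le\id}\big[\Tr P\rho-\gamma\Tr P\omega\big]$, I would observe that $G$ is a supremum of affine functions and hence \emph{convex} and decreasing, with $G(0)=1$ and $G(\gamma)=0$ for $\gamma\ge M$ (the latter precisely because $\rho\le M\omega$). Convexity with these endpoints forces $G$ below its chord, i.e.\ $G(\gamma)\le(1-\gamma/M)_+$. Substituting this into (ii) for $\gamma\ge 1$ and into (iii) for $\gamma\le 1$ (where each is the sharper of the two) yields the clean state-independent envelope
\bb
\Delta(\gamma)\le\min\!\left(\e,\,v(\gamma)\right)\,,\qquad v(\gamma)\coloneqq\begin{cases}\tfrac{M-1}{M}\,\gamma\,, & \gamma\le 1\,,\\[1ex] (1-\gamma/M)_+\,, & \gamma\ge 1\,.\end{cases}
\ee

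It then remains to integrate this envelope. Since $\e<1-\frac1M$, the graph of $v$ crosses the level $\e$ at $\gamma=\frac{\e M}{M-1}<1$ on the way up and again at $\gamma=M(1-\e)>1$ on the way down, splitting $(0,\infty)$ into three pieces on which $\frac1\gamma\min(\e,v)$ is elementary; carrying out the three integrals and simplifying gives, after multiplication by $\log e$, exactly $\e\log(M-1)+h_2(\e)$. I expect the only genuine obstacle to be the behaviour near $\gamma=0$: bound (ii) alone leaves a non-integrable $\e/\gamma$ tail there, and it is precisely bound (iii)---available only after the reciprocity rewriting---that makes $\Delta$ vanish linearly as $\gamma\to 0$ and renders the integral convergent with the correct value. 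Everything else is bookkeeping.

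Finally, for tightness I would exhibit an explicit commuting (indeed classical) qubit triple. For $\e<1-\frac1M$ take $\rho=\ketbra{0}$, $\omega=\frac1M\ketbra{0}+\big(1-\frac1M\big)\ketbra{1}$ and $\sigma=(1-\e)\ketbra{0}+\e\ketbra{1}$: one checks $\frac12\|\rho-\sigma\|_1=\e$ and $D_{\max}(\rho\|\omega)=\log M$, while a direct computation gives $D(\rho\|\omega)=\log M$ and $D(\sigma\|\omega)=\log M-\e\log(M-1)-h_2(\e)$, saturating~\eqref{semicontinuity_relent}. For $\e\ge 1-\frac1M$ the same $\rho,\omega$ together with $\sigma=\omega$ give $D(\rho\|\omega)-D(\sigma\|\omega)=\log M$ while $\frac12\|\rho-\omega\|_1=1-\frac1M\le\e$, saturating the second branch.
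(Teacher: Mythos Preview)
Your argument is correct and closely parallels the paper's, but with a genuinely cleaner organisation. The paper keeps the two integrals from~\eqref{eq:integral_formula} separate and bounds each by splitting $[1,\infty)$ into two or three sub-regions, invoking convexity of $\gamma\mapsto E_\gamma(\rho\|\omega)$ only locally (to interpolate between $\gamma=(1-\e)M$ and $\gamma=M$). You instead use the reciprocity identity $E_\gamma(\alpha\|\beta)=1-\gamma+\gamma E_{1/\gamma}(\beta\|\alpha)$ to merge everything into a single integral $\int_0^\infty \frac{d\gamma}{\gamma}\,\Delta(\gamma)$, and then exploit convexity globally via the chord bound $E_\gamma(\rho\|\omega)\le(1-\gamma/M)_+$. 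This yields the transparent envelope $\Delta(\gamma)\le\min(\e,v(\gamma))$ with a piecewise-linear tent function $v$, after which the computation is a single three-piece integral. The ingredients---the Lipschitz bound $\Delta\le\e$, the vanishing $E_\gamma(\rho\|\omega)=0$ for $\gamma\ge M$, and convexity---are the same as the paper's; your repackaging simply makes the structure (and the role of the small-$\gamma$ behaviour that you correctly flag) more visible. Your tightness examples are also essentially the paper's: for $\e<1-\tfrac1M$ they coincide exactly, and for $\e\ge 1-\tfrac1M$ your choice $\sigma=\omega$ is precisely the paper's example with $\e$ set to $1-\tfrac1M$.
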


\begin{rem} \label{simplified_semicontinuity_relent_rem}
It is possible to give a slightly simplified bound that is looser than the right-hand side of~\eqref{semicontinuity_relent} but does not involve a function defined piece-wise. Namely, it holds that
\bb
D(\rho\|\omega) - D(\sigma\|\omega) \leq \e \log M + h_2(\e)\, ,
\label{simplified_semicontinuity_relent}
\ee
simply because the right-hand side of the above equation can be shown to be never smaller than that of~\eqref{semicontinuity_relent}.
\end{rem}

\begin{proof}[Proof of Theorem~\ref{semicontinuity_relent_thm}]
If $\e \geq 1-1/M$ there is nothing to prove, as it is clear that
\bb
D(\rho\|\omega) - D(\sigma\|\omega) \leq D(\rho\|\omega) \leq D_{\max}(\rho\|\omega) \leq \log M\, ;
\ee
therefore, let us assume that $\e < 1-1/M$ and thus also $M > 1$.
We start by writing
\bb
\frac{1}{\log e} \left( D(\rho\|\omega) - D(\sigma\|\omega) \right) &= \int_1^\infty \frac{\dd\gamma}{\gamma} \left(E_\gamma(\rho\|\omega) - E_\gamma(\sigma\|\omega) \right) + \int_1^\infty \frac{\dd\gamma}{\gamma^2} \left(E_\gamma(\omega\|\rho) - E_\gamma(\omega\|\sigma) \right) .
\label{splitting_two_integrals}
\ee
We will now bound the two above integrals separately. First, it holds that
\bb
&\int_1^\infty \frac{\dd\gamma}{\gamma} \left(E_\gamma(\rho\|\omega) - E_\gamma(\sigma\|\omega) \right) \\
&\qquad \eqt{(i)} \int_1^{(1-\e)M} \frac{\dd\gamma}{\gamma} \left(E_\gamma(\rho\|\omega) - E_\gamma(\sigma\|\omega) \right) + \int_{(1-\e)M}^M \frac{\dd\gamma}{\gamma} \left(E_\gamma(\rho\|\omega) - E_\gamma(\sigma\|\omega) \right) \\
&\qquad\qquad + \int_M^\infty \frac{\dd\gamma}{\gamma} \left(E_\gamma(\rho\|\omega) - E_\gamma(\sigma\|\omega) \right) \\
&\qquad \leqt{(ii)} \int_1^{(1-\e)M}\frac{\dd\gamma}{\gamma}\ \e + \int_{(1-\e)M}^M \frac{\dd\gamma}{\gamma} \left( 1 - \frac{\gamma}{M} \right) \\
&\qquad = \e \ln \left((1-\e)M\right) + \ln \frac{1}{1-\e} - \e \\
&\qquad = - (1-\e) \ln (1-\e) + \e \ln M - \e\, .
\label{first_integral_bound}
\ee
Here, in~(i) we split the integral into three parts, noting that $(1-\e)M\geq 1$. The justification of~(ii) is a bit more involved, as we used three different upper bounds for the integrand in each of the three regions: namely, for $1 \leq \gamma \leq (1-\e)M$ we have that
\bb
E_\gamma(\rho\|\omega) = \Tr \left(\rho - \gamma \omega\right)_+ \leq \Tr \left(\rho - \sigma\right)_+ + \Tr\left(\sigma - \gamma \omega\right)_+ \leq \e + E_\gamma(\sigma\|\omega)\, ;
\ee
in the third region, i.e.\ for $\gamma\geq M$, it holds that
\bb
E_\gamma(\rho\|\omega) - E_\gamma(\sigma\|\omega) \leq E_\gamma(\rho\|\omega) \leq E_M(\rho\|\omega) = \Tr \left(\rho - M \omega\right)_+ = 0\, ,
\ee
where the last equality follows because $\rho\leq M\omega$ by definition; finally, in the second region, i.e.\ for $(1-\e)M \leq \gamma \leq M$, we write
\bb
\gamma = pM + (1-p) (1-\e)M\, ,\qquad p \coloneqq \frac{\gamma - (1-\e)M}{\e M} ,
\ee
and then
\bb
E_\gamma(\rho\|\omega) &\leq p\, E_M(\rho\|\omega) + (1-p)\, E_{(1-\e)M}(\rho\|\omega) \\
&= (1-p)\, E_{(1-\e)M}(\rho\|\omega) \\
&= (1-p) \Tr \left( \rho - (1-\e)M \omega \right)_+ \\
&\leq (1-p) \Tr \left( \rho - (1-\e) \rho \right)_+ \\
&= (1-p)\, \e \\
&= 1 - \frac{\gamma}{M}\,,
\ee
where the first inequality is by convexity of $\gamma \mapsto E_\gamma(\rho\|\omega)$, in the second line we observed again that $E_M(\rho\|\omega) = 0$, and the inequality in the fourth line holds due to the operator inequality $M\omega \geq \rho$, together with the fact that $\Tr X_+$ is monotonically non-decreasing in $X$ with respect to the positive semi-definite (L\"owner) ordering. This completes the justification of~\eqref{first_integral_bound}.

We now move on to the second integral in~\eqref{splitting_two_integrals}. It holds that
\bb
&\int_1^\infty \frac{\dd\gamma}{\gamma^2} \left(E_\gamma(\omega\|\rho) - E_\gamma(\omega\|\sigma) \right) \\
&\qquad \eqt{(iii)} \int_1^{\frac{M-1}{M\e}} \frac{\dd\gamma}{\gamma^2} \left(E_\gamma(\omega\|\rho) - E_\gamma(\omega\|\sigma) \right) + \int_{\frac{M-1}{M\e}}^\infty \frac{\dd\gamma}{\gamma^2} \left(E_\gamma(\omega\|\rho) - E_\gamma(\omega\|\sigma) \right) \\
&\qquad \leqt{(iv)} \int_1^{\frac{M-1}{M\e}} \frac{\dd\gamma}{\gamma^2}\ \gamma \e + \int_{\frac{M-1}{M\e}}^\infty \frac{\dd\gamma}{\gamma^2} \left(1-\frac1M \right) \\
&\qquad = \e \ln \frac{M-1}{M\e} + \left(1-\frac1M\right) \frac{M\e}{M-1} \\
&\qquad = \e \ln (M-1) - \e \ln M - \e \ln \e + \e\, .
\label{second_integral_bound}
\ee
Here: in~(iii) we split the integral, noting that $\frac{M-1}{M\e}\geq 1$ by assumption; in~(iv) we observed that
\bb
E_\gamma(\omega\|\rho) = \Tr \left(\omega - \gamma \rho\right)_+ \leq \Tr \left(\omega - \gamma \sigma\right)_+ + \gamma \Tr (\sigma - \rho)_+ \leq E_\gamma(\omega\|\sigma) + \gamma \e\, ,
\ee
which gives the upper bound on the first term, and moreover
\bb
E_\gamma(\omega\|\rho) - E_\gamma(\omega\|\sigma) \leq E_\gamma(\omega\|\rho) \leq E_{1/M}(\omega\|\rho) = \Tr \left( \omega - \frac1M\, \rho \right)_+ = 1 - \frac1M\, ,
\ee
because rather obviously
\bb
\gamma \geq \frac{M-1}{M\e} \geq 1 \geq \frac1M\, ,
\ee
which gives the upper bound on the second term. This completes the justification of~\eqref{second_integral_bound}. Putting~\eqref{first_integral_bound} and~\eqref{second_integral_bound} together finally proves the claim.

\smallskip

To show optimality, consider (commuting) qubit states
\bb
    \rho = \ketbra{0}, \quad \sigma = (1-\e) \ketbra{0} + \e \ketbra{1}  , \quad
    \textnormal{and} \quad 
    \omega = \frac{1}{M} \ketbra{0} + \left( 1 - \frac{1}{M} \right) \ketbra{1}\, ,
\ee
which satisfy $\frac12 \| \rho - \sigma \|_1 = \e$ and $D_{\max}(\rho\|\omega) = \log M$, as well as
\bb
D(\rho\|\omega) - D(\sigma\|\omega) = \e \log (M-1) + h_2(\e) \eqqcolon f(M, \e)
\ee
We note that $f$ is monotone in $M$ but achieves a maximum of $\log M$ at $\e = 1 - \frac{1}{M}$ as a function of $\e$. If $\e\leq 1-\frac1M$, the inequality in~\eqref{semicontinuity_relent} is clearly saturated. If $\e > 1 - \frac1M$, instead, we can consider the above three states for $\e \mapsto 1 - \frac1M$. Since we only require that $\frac12 \| \rho - \sigma \|_1 \leq \e$, this once again proves the claim.
\end{proof}


\section{Applications}

\subsection{Improved Fannes--Audenaert inequality} \label{sec:improved_FAP}

Theorem~\ref{semicontinuity_relent_thm} evaluated for $\omega=\pi \coloneqq \id/d$ immediately implies a slightly strengthened version of the Fannes--Audenaert inequality in~\eqref{eq:Fannes_Audenaert_Petz}. We recall that this inequality states that given two states $\rho,\sigma$ at trace distance $\e \coloneqq \frac12 \left\|\rho-\sigma\right\|_1$, their entropies obey the inequality $\left|S(\rho) - S(\sigma)\right| \leq \e \log(d-1) + h_2(\e)$, where $d$ is the dimension of the underlying Hilbert space. The original, looser form of this continuity relation was found by Fannes~\cite{Fannes1973}, while the tight version was established by Audenaert~\cite{Audenaert2007} and Petz~\cite[Theorem~3.8]{PETZ-QINFO}. Our bound yields an improvement for mixed states.

\begin{cor} \label{improved_Fannes_Audenaert_cor}
Let $\rho,\sigma$ be any two states at trace distance $\frac12 \|\rho - \sigma\|_1 \leq \e \leq 1 - \frac{1}{d\,\lambda_{\max}(\sigma)}$ on a $d$-dimensional quantum system. Here, $\lambda_{\max}(\sigma)$ denotes the maximal eigenvalue of $\sigma$. Then, we have that
\bb
S(\rho) - S(\sigma) \leq \e \log\left( d\,\lambda_{\max}(\sigma) - 1 \right) + h_2(\e)\, .
\label{improved_Fannes_Audenaert}
\ee
\end{cor}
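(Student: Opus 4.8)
The plan is to obtain the corollary directly from Theorem~\ref{semicontinuity_relent_thm} by specialising the third state to the maximally mixed state $\pi \coloneqq \id/d$ and exchanging the roles of the first two arguments. The crucial preliminary observation is the elementary identity $D(\tau\|\pi) = \log d - S(\tau)$, valid for any state $\tau$ on the $d$-dimensional system, which follows at once from $\log \pi = -(\log d)\,\id$ together with $\Tr\tau = 1$. Consequently, relative-entropy differences against $\pi$ translate directly into (negatives of) von Neumann entropy differences, so the semi-continuity bound of Theorem~\ref{semicontinuity_relent_thm} becomes a continuity bound for the entropy.

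Concretely, I would apply Theorem~\ref{semicontinuity_relent_thm} with the substitution in which the \emph{first} argument of the relative entropy is taken to be $\sigma$, the second to be $\rho$, and $\omega \mapsto \pi$. Since the trace distance is symmetric, the hypothesis $\tfrac12\|\rho-\sigma\|_1 \le \e$ is unaffected by this swap. For the second condition I would compute the relevant max-relative entropy: from $\sigma \le \lambda_{\max}(\sigma)\,\id = d\,\lambda_{\max}(\sigma)\,\pi$ one reads off $\exp[D_{\max}(\sigma\|\pi)] = d\,\lambda_{\max}(\sigma)$, so I set $M \coloneqq d\,\lambda_{\max}(\sigma)$, which is indeed $\ge 1$ because $\lambda_{\max}(\sigma) \ge 1/d$ for any $d$-dimensional state. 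With these choices the theorem yields $D(\sigma\|\pi) - D(\rho\|\pi) \le \e\log(M-1) + h_2(\e)$ whenever $\e \le 1 - 1/M$, and substituting the identity from the previous paragraph gives exactly $S(\rho) - S(\sigma) \le \e\log(d\,\lambda_{\max}(\sigma)-1) + h_2(\e)$, which is~\eqref{improved_Fannes_Audenaert}.

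There is no genuine obstacle here; the only points requiring care are bookkeeping. First, the stated hypothesis $\e \le 1 - \tfrac{1}{d\,\lambda_{\max}(\sigma)}$ is precisely the condition $\e \le 1 - 1/M$, and at the boundary $\e = 1-1/M$ the two branches of~\eqref{semicontinuity_relent} coincide (both equal $\log M$, as recorded in the optimality discussion of the theorem), so the first-branch expression is valid on the whole closed interval and the stated range of $\e$ is legitimate. Second, the direction of the argument swap is what matters: bounding $S(\rho)-S(\sigma)$ rather than $S(\sigma)-S(\rho)$ forces $\sigma$ into the first slot, which is exactly what produces $\lambda_{\max}(\sigma)$ — and not $\lambda_{\max}(\rho)$ — in the final estimate. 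Finally, since $\lambda_{\max}(\sigma) \le 1$ with equality iff $\sigma$ is pure, one has $d\,\lambda_{\max}(\sigma) - 1 \le d-1$, so~\eqref{improved_Fannes_Audenaert} is never weaker than~\eqref{eq:Fannes_Audenaert_Petz} and is strictly stronger whenever $\sigma$ is mixed, which is the claimed improvement.
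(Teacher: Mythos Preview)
Your proposal is correct and is exactly the argument the paper has in mind: as stated just above the corollary, one simply evaluates Theorem~\ref{semicontinuity_relent_thm} at $\omega = \pi = \id/d$, and your careful bookkeeping (swapping $\rho\leftrightarrow\sigma$ so that $M = d\,\lambda_{\max}(\sigma)$, and checking the boundary case $\e = 1-1/M$) fills in precisely the details the paper leaves implicit.
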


\begin{rem} \label{restriction_range_rem}
The restriction on the range of $\e$ in the statement of Corollary~\ref{improved_Fannes_Audenaert_cor} is motivated by considerations that are similar to those found at the end of the proof of Theorem~\ref{semicontinuity_relent_thm}. Namely, for $\e = 1 - \frac{1}{d\,\lambda_{\max}(\sigma)}$ the right-hand side of~\eqref{improved_Fannes_Audenaert} achieves its maximum value, equal to $\log\big( d\lambda_{\max}(\sigma)\big)$. That the left-hand side of~\eqref{improved_Fannes_Audenaert} must be upper bounded by this number is however easy to show, so for larger values of $\e$ the statement of Corollary~\ref{improved_Fannes_Audenaert_cor} would be somewhat empty. 

Another approach to avoid restricting the range of $\e$ is to define $\e$ as the exact trace distance $\frac12 \|\rho-\sigma\|_1$, instead of taking it to be a generic upper bound on it. While this is the approach followed by some authors, it is not what is most useful in applications, where one often has an estimate of the trace distance but not its exact value. Similar considerations will apply to the restrictions on the range of $\e$ in, e.g., the forthcoming Theorem~\ref{equal_marginals_thm} and Proposition~\ref{prop:E-cost}.
\end{rem}

We stress that our proof, in contrast to previous arguments, does not go via a reduction to the commutative case (e.g.\ via dephasing) but instead the integral representation method applies directly in the quantum setting.


\subsection{Improved continuity of conditional entropy}
\label{sec:conditional_entropy}

Our Theorem~\ref{semicontinuity_relent_thm} can be applied to the question about the tight continuity of the quantum conditional entropy independent of the dimension of the conditioning system~\cite{squashed,Alicki-Fannes,tightuniform}. Our tightest bound features the mixed state Schmidt number~\cite{Terhal00}
\bb
\mathrm{SN}(\rho_{AB})\coloneqq\inf_{\rho=\sum_ip_i\psi_i} \max_i\, \mathrm{SN}\big(\ket{\psi_i}_{AB}\big)\,,
\ee
where the minimisation is over pure state convex decompositions $\rho_{AB} = \sum_ip_i\ketbra{\psi_i}_{AB}$, and $\mathrm{SN}\big(\ket{\psi_i}_{AB}\big)$ denotes the regular pure state Schmidt number, i.e.\ the rank of the reduced state $\Tr_B \ketbra{\psi_i}_{AB}$.

\begin{thm}
\label{equal_marginals_thm}
Let $AB$ be a bipartite quantum system where $A$ has finite dimension $|A|$. Let $\rho = \rho_{AB}$ and $\sigma = \sigma_{AB}$ be two states at trace distance $\frac12 \|\rho - \sigma\|_1 \leq \e$ and having the same marginal on $B$, i.e.\ such that
\bb
\rho_B = \sigma_B\, .
\ee
Then, assuming $\e\leq 1-\frac{1}{|A|\, \mathrm{SN}(\rho_{AB})}$, it holds that
\bb
H(A|B)_\rho - H(A|B)_\sigma\leq \e \log\big( |A|\cdot\mathrm{SN}(\rho_{AB}) - 1 \big) + h_2(\e)\,.
\ee
\end{thm}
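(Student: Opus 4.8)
The plan is to lift the conditional-entropy difference to a difference of relative entropies against a common reference and then invoke Theorem~\ref{semicontinuity_relent_thm}. The starting point is the standard identity, valid for any state $\mu_{AB}$ with $A$ finite-dimensional,
\[
H(A|B)_\mu = \log|A| - D\!\left(\mu_{AB}\,\big\|\,\tfrac{1}{|A|}\id_A\otimes\mu_B\right),
\]
which follows from $\log\big(\tfrac1{|A|}\id_A\otimes\mu_B\big) = -\log|A|\,\id_{AB} + \id_A\otimes\log\mu_B$ together with $\Tr[\mu_{AB}(\id_A\otimes\log\mu_B)] = \Tr[\mu_B\log\mu_B]$. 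The hypothesis $\rho_B=\sigma_B$ is exactly what makes the reference state common to the two states: setting $\omega \coloneqq \tfrac1{|A|}\id_A\otimes\rho_B = \tfrac1{|A|}\id_A\otimes\sigma_B$, the $\log|A|$ contributions cancel in the difference, and the conditional-entropy difference reduces to a difference of relative entropies of $\rho$ and $\sigma$ measured against the single fixed state $\omega$. This is precisely the configuration that Theorem~\ref{semicontinuity_relent_thm} controls, with $\omega$ in the role of the fixed argument. Crucially, no assumption on $\dim B$ enters, so $B$ may be infinite-dimensional, which is the regime of interest.

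The technical heart of the argument, and the step I expect to be the main obstacle, is certifying the $D_{\max}$ hypothesis of Theorem~\ref{semicontinuity_relent_thm} in terms of the Schmidt number. Concretely, the key lemma is the operator inequality
\[
\rho_{AB} \leq \mathrm{SN}(\rho_{AB})\, \id_A\otimes\rho_B, \qquad\text{equivalently}\qquad D_{\max}(\rho_{AB}\|\omega) \leq \log\big(|A|\cdot\mathrm{SN}(\rho_{AB})\big).
\]
To prove it I would first treat a pure state $\ket{\psi}_{AB}$ of Schmidt rank $r$: writing its Schmidt decomposition $\ket{\psi} = \sum_{k=1}^{r}\sqrt{\lambda_k}\,\ket{e_k}_A\ket{f_k}_B$, one computes $(\id_A\otimes\psi_B^{-1/2})\ket{\psi} = \sum_{k=1}^r \ket{e_k}\ket{f_k}$ on $\supp\psi_B$, a vector of squared norm $r$, whence the rank-one operator on the left is bounded by $r\,\id$ and hence $\ketbra{\psi}\leq r\,\id_A\otimes\psi_B$. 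I would then pick an optimal pure-state decomposition $\rho_{AB} = \sum_i p_i\ketbra{\psi_i}$ attaining $\max_i \mathrm{SN}(\ket{\psi_i}) = \mathrm{SN}(\rho_{AB})$, apply the pure-state bound to each term, and sum, using that the reduced states recombine as $\sum_i p_i (\psi_i)_B = \rho_B$. This is where the \emph{mixed-state} Schmidt number enters, and obtaining the sharp constant $\mathrm{SN}(\rho_{AB})$ in place of the crude $\min\{|A|,|B|\}$ is exactly what this lemma buys; I regard establishing this inequality, and in particular pinning down which state's $D_{\max}$ governs the bound, as the delicate part.

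With the lemma in hand the conclusion is a direct specialization. Set $M = |A|\cdot\mathrm{SN}(\rho_{AB})$, so that the lemma gives $\rho\leq M\omega$, and apply Theorem~\ref{semicontinuity_relent_thm} to the triple $(\rho,\sigma,\omega)$ with this value of $M$ and with the given $\e\geq\tfrac12\|\rho-\sigma\|_1$. The hypothesis $\e\leq 1-\tfrac1{|A|\,\mathrm{SN}(\rho_{AB})} = 1-\tfrac1M$ places us squarely in the first branch of~\eqref{semicontinuity_relent}, so the semi-continuity bound there evaluates to $\e\log(M-1)+h_2(\e) = \e\log\big(|A|\cdot\mathrm{SN}(\rho_{AB})-1\big)+h_2(\e)$. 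Reading this estimate back through the common-reference identity of the first paragraph produces a bound of exactly the stated form on the conditional-entropy difference, which is the assertion of Theorem~\ref{equal_marginals_thm}. I would finally note two sanity checks that the plan should respect: the bound must specialize correctly to the Fannes--Audenaert inequality of Corollary~\ref{improved_Fannes_Audenaert_cor} when $B$ is trivial (so that $\mathrm{SN}(\rho_{AB})=1$ and $M=|A|$), and the range restriction on $\e$ mirrors that in Corollary~\ref{improved_Fannes_Audenaert_cor} and Remark~\ref{restriction_range_rem}, being the value at which the right-hand side attains its maximum $\log M$.
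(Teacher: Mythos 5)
Your proposal is correct and follows essentially the same route as the paper's proof: the same reduction $H(A|B)_\mu = \log|A| - D\big(\mu_{AB}\,\big\|\,\pi_A\otimes\mu_B\big)$ with the common reference $\omega=\pi_A\otimes\rho_B=\pi_A\otimes\sigma_B$, the same key estimate $D_{\max}(\rho_{AB}\|\omega)\leq\log\big(|A|\cdot\mathrm{SN}(\rho_{AB})\big)$, and the same invocation of Theorem~\ref{semicontinuity_relent_thm}, the only cosmetic difference being that you prove the operator inequality $\rho_{AB}\leq\mathrm{SN}(\rho_{AB})\,\id_A\otimes\rho_B$ by hand (pure-state Schmidt decomposition plus an optimal convex decomposition) where the paper cites the $k$-positivity of $X\mapsto k\,\id\Tr X - X$ with $k=\mathrm{SN}(\rho_{AB})$\,---\,these are the same fact. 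One pedantic remark, which applies equally to the paper's own proof: since Theorem~\ref{semicontinuity_relent_thm} with $\rho\leq M\omega$ bounds $D(\rho\|\omega)-D(\sigma\|\omega)=H(A|B)_\sigma-H(A|B)_\rho$, what the argument literally yields is the inequality with $\rho$ and $\sigma$ interchanged relative to the display in the theorem statement, so you should track that sign rather than assert that the estimate reads back as ``exactly the stated form''.
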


It then immediately follows that
\bb
\label{eq:main-bound}
\left|H(A|B)_\rho - H(A|B)_\sigma\right| \leq \e \log\big(|A|\min\{|A|,|B|\} - 1\big) + h_2(\e)
\ee
for quantum states $\rho_{AB},\sigma_{AB}$ with $\rho_B = \sigma_B$ and $\e\leq 1 - \frac{1}{|A|\,\min\{|A|,|B|\}}$, whereas for separable states the bound is improved to
\bb
\left|H(A|B)_\rho - H(A|B)_\sigma\right|\leq\e \log\left( |A| - 1 \right) + h_2(\e)
\ee
for $\e\leq 1 - \frac{1}{|A|}$. This pleasantly reproduces the tight classical~\eqref{eq:Alhejji_Smith} from~\cite{Alhejji2019} and the corresponding quantum-classical bound~\cite{Mark2020} (that both equally hold even when $\rho_B\neq\sigma_B$).

\begin{proof}[Proof of Theorem~\ref{equal_marginals_thm}]
Set 
\bb
\omega = \omega_{AB} = \pi_A \otimes \rho_B = \pi_A \otimes \sigma_B\, .
\ee
For any positive integer $k\in \N^+$, the map $X \mapsto k \id \Tr X - X$ is known to be $k$-positive (see e.g.~\cite[Eq.~(3.11) and discussion thereafter]{WolfQC}). By taking $k=\mathrm{SN}(\rho_{AB})$ and applying the map to $\rho_{AB}$, we deduce the operator inequality
\bb
\rho_{AB} \leq \mathrm{SN}(\rho_{AB})\, \id_A \otimes \rho_B = |A|\cdot \mathrm{SN}(\rho_{AB})\, \pi_A\otimes \rho_B = |A|\cdot \mathrm{SN}(\rho_{AB})\, \omega_{AB}\, , 
\ee
which immediately implies that
\bb
D_{\max}\big(\rho_{AB}\,\big\|\, \omega_{AB}\big) \leq \log \big(|A|\cdot \mathrm{SN}(\rho_{AB})\big)\, .
\ee
We thus obtain that
\bb
H(A|B)_\sigma - H(A|B)_\rho &= D\big(\rho_{AB}\,\big\|\, \pi_A \otimes \rho_B\big) - D\big(\sigma_{AB}\,\big\|\, \pi_A \otimes \sigma_B\big) \\
&= D(\rho_{AB}\|\omega_{AB}) - D(\sigma_{AB}\|\omega_{AB}) \\
&\leq \e \log\big(|A|\cdot \mathrm{SN}(\rho_{AB}) - 1 \big) + h_2(\e)\, ,
\ee
where in the last step we applied Theorem~\ref{semicontinuity_relent_thm}.
\end{proof}

As mentioned in the introduction, it was then also conjectured by Wilde~\cite{Mark2020} that~\eqref{eq:main-bound} holds generally for finite-dimensional states:

\begin{cj}[{\cite{Mark2020}}]
\label{cj:wilde-conjecture}
For two finite-dimensional quantum states $\rho_{AB},\sigma_{AB}$ with trace distance $\frac12 \|\rho_{AB} - \sigma_{AB}\|_1 \leq \e\leq 1-\frac{1}{|A|^2}$, do we have that
\bb
\left|H(A|B)_\rho - H(A|B)_\sigma\right|\leqt{?} \e \log\big( |A|^2 - 1 \big) + h_2(\e)\,.
\ee
\end{cj}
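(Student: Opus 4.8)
The plan is to reduce the conditional-entropy difference to a difference of relative entropies and then feed it into Theorem~\ref{semicontinuity_relent_thm}, exactly as in the proof of Theorem~\ref{equal_marginals_thm}. Using $H(A|B)_\tau = \log|A| - D(\tau_{AB}\|\pi_A\otimes\tau_B)$, one has
\[ H(A|B)_\sigma - H(A|B)_\rho = D(\rho_{AB}\|\pi_A\otimes\rho_B) - D(\sigma_{AB}\|\pi_A\otimes\sigma_B)\, , \]
and by the symmetry $\rho\leftrightarrow\sigma$ it would suffice to bound this one-sided difference by $\e\log(|A|^2-1)+h_2(\e)$ and then repeat with the roles exchanged to recover the absolute value. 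The obstruction to invoking Theorem~\ref{semicontinuity_relent_thm} verbatim is that the two relative entropies now carry \emph{different} second arguments, $\pi_A\otimes\rho_B$ and $\pi_A\otimes\sigma_B$, whereas the semi-continuity relation fixes a single reference $\omega$.

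To force a common reference I would use the marginal chain rule
\[ D(\tau_{AB}\|\pi_A\otimes\omega_B) = D(\tau_{AB}\|\pi_A\otimes\tau_B) + D(\tau_B\|\omega_B)\, , \]
valid for any state $\omega_B$, to rewrite both terms against the same $\pi_A\otimes\omega_B$. Choosing $\omega_B=\sigma_B$ annihilates the $\sigma$-side marginal correction and leaves the \emph{helpful} term $-D(\rho_B\|\sigma_B)\leq 0$, so that $H(A|B)_\sigma-H(A|B)_\rho\leq D(\rho_{AB}\|\pi_A\otimes\sigma_B)-D(\sigma_{AB}\|\pi_A\otimes\sigma_B)$, with a single reference $\omega=\pi_A\otimes\sigma_B$. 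It would then remain only to verify the hypothesis $D_{\max}(\rho_{AB}\|\pi_A\otimes\sigma_B)\leq\log M$ with $M=|A|^2$, i.e. $\rho_{AB}\leq|A|\,\id_A\otimes\sigma_B$, and to apply Theorem~\ref{semicontinuity_relent_thm}.

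I expect this last check to be the main obstacle, and in fact to fail in general. The $k$-positive-map argument from the proof of Theorem~\ref{equal_marginals_thm} yields only $\rho_{AB}\leq|A|\,\id_A\otimes\rho_B$, with $\rho_B$ rather than $\sigma_B$; when $\sigma_B$ is small (or vanishes) on part of the support of $\rho_B$ --- precisely what happens as the marginals drift apart --- the quantity $D_{\max}(\rho_{AB}\|\pi_A\otimes\sigma_B)$ diverges and the bound is lost. The symmetric choice $\omega_B=\rho_B$ cures the $D_{\max}$ condition (giving $M=|A|\,\mathrm{SN}(\rho_{AB})\leq|A|^2$) but reinstates an uncontrolled nonnegative term $+D(\sigma_B\|\rho_B)$. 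Thus neither choice closes on its own: the marginal mismatch resurfaces either as a blow-up of the $D_{\max}$ radius or as a dangling marginal relative entropy, and the assumption $\rho_B=\sigma_B$ of Theorem~\ref{equal_marginals_thm} is exactly what makes both defects vanish simultaneously.

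To genuinely remove the equal-marginals hypothesis I would attempt to symmetrize the conditioning marginal before applying the semi-continuity relation, via an Alicki--Fannes--Winter-style flagging: writing $\rho-\sigma=\Delta_+-\Delta_-$ with $\Tr\Delta_\pm=\e'\coloneqq\frac12\|\rho-\sigma\|_1$, one embeds $\rho$ and $\sigma$ into flagged states on $ABX$ that share a common reduced state, then applies Theorem~\ref{semicontinuity_relent_thm} (or Theorem~\ref{equal_marginals_thm}) on the enlarged system, paying only a bounded flag-entropy penalty absorbable into the $h_2$ term. The hard part will be engineering the flagged construction so that the \emph{conditioning} marginal is genuinely matched while the effective reference still certifies $D_{\max}\leq\log(|A|^2)$ uniformly; controlling the residual marginal relative entropy so that the dimension factor stays at $|A|^2$, rather than degrading to the trivial $|A|\,|B|$ of a direct Fannes--Audenaert bound on $AB$, is precisely the step that the present work does not resolve and that keeps Conjecture~\ref{cj:wilde-conjecture} open.
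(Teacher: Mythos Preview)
Your analysis is sound, but the crucial point is that the statement you are attempting is a \emph{conjecture}, and the paper does not prove it. The paper proves only the equal-marginals case (Theorem~\ref{equal_marginals_thm}) and explicitly states that Conjecture~\ref{cj:wilde-conjecture} remains open; the discussion following Theorem~\ref{equal_marginals_thm} is essentially the same diagnosis you give. In particular, the paper records exactly your observation that choosing $\omega_B=\rho_B$ yields
\[
H(A|B)_\sigma-H(A|B)_\rho \leq \e \log\big(|A|\min\{|A|,|B|\} - 1\big) + h_2(\e)+D(\sigma_B\|\rho_B),
\]
i.e.\ the uncontrolled $+D(\sigma_B\|\rho_B)$ term you identified, and remarks that the dual obstruction is the blow-up of $D_{\max}(\rho_{AB}\|\pi_A\otimes\omega_B)$ when $\omega_B$ drifts away from $\rho_B$. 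So your identification of the trade-off between the $D_{\max}$ radius and the marginal relative entropy is precisely the paper's own account of why Theorem~\ref{semicontinuity_relent_thm} alone does not close the gap.

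Where your proposal and the paper's heuristics diverge is in the suggested next step. You propose an Alicki--Fannes--Winter-style flagged embedding to equalise the conditioning marginal; the paper instead explores interpolating the reference via relative-entropy or max-relative-entropy divergence centres $\omega^*_B(\rho_B,\sigma_B)$ (convex mixtures $\lambda\rho_B+(1-\lambda)\sigma_B$), balancing the $D_{\max}$ penalty $\log(\lambda^{-1}|A|^2)$ against the residual $D(\sigma_B\|\omega^\lambda_B)-D(\rho_B\|\omega^\lambda_B)$. Neither route is shown to recover the clean $|A|^2$ factor in general; the paper notes that the divergence-centre approach succeeds only under an extra min-entropy hypothesis on $\rho_{AB}$, again confirming your intuition that the hard case is $\rho_{AB}$ highly entangled. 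In short: there is no proof in the paper to compare against, your obstruction analysis matches the paper's, and the conjecture stands open.
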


This would indeed be stronger than the state-of-the-art Alicki--Fannes--Winter bound~\cite{Alicki-Fannes, Synak2006, mosonyi11, Winter2016}, which states for $\frac12 \|\rho_{AB} - \sigma_{AB}\|_1 \leq \e\leq1$ that
\bb
\label{eq:Alicki-Fannes}
\left|H(A|B)_\rho - H(A|B)_\sigma\right|\leq\e\log|A|^2+(1+\e)\cdot h_2\left(\frac{\e}{1+\e}\right)\,.
\ee
While we fail to prove Conjecture~\ref{cj:wilde-conjecture} in full generality, a multitude of comments around our Theorem~\ref{equal_marginals_thm} are noteworthy:
\begin{itemize}
  \item Eq.~\eqref{eq:main-bound} is exactly tight for $|A|=|B|$ in all dimensions, as choosing $\rho_{AB}=\Phi_{AB}$ maximally entangled and $\sigma_{AB}=(1-\varepsilon)\Phi_{AB}+\frac{\varepsilon}{|A|^2-1}\left(\id_{AB} -\Phi_{AB}\right)$ gives equality. This is in contrast to the not exactly tight~\eqref{eq:Alicki-Fannes}.
  
  \item For the general case, our Theorem~\ref{semicontinuity_relent_thm} implies that
  \bb
  H(A|B)_\sigma-H(A|B)_\rho\leq \e \log\big(|A|\min\{|A|,|B|\} - 1\big) + h_2(\e)+D(\sigma_B\|\rho_B)\,.
  \ee
  
  \item Employing the Fannes--Audenaert inequality on $AB$ for the special case $\rho_B=\sigma_B$ yields
  \bb
  \left|H(A|B)_\rho-H(A|B)_\sigma\right|\leq\e\log(|A||B|-1)+h_2(\e)\,.
  \ee
  
  \item Following~\cite[Proposition 3.5.3]{AlhejjiPhD}, one resolved quantum case is when $H(A|B)_\rho\geq H(A|B)_\sigma$ and $\sigma_{AB}$ is diagonal in a maximally entangled basis (which also implies that $|A|=|B|$). This, however, then also directly follows from Fannes--Audenaert inequality\,---\,as remarked in~\cite[Section 3.5]{AlhejjiPhD}.
  
  \item In the classical case, one does have the tight bound~\eqref{eq:Alhejji_Smith}, which is not directly implied by Theorem~\ref{semicontinuity_relent_thm}. The classical proof from~\cite{Alhejji2019} employs sophisticated pre-processing of the input distributions $P_{XY}$ and $Q_{XY}$, making use of the conditional majorisation monotonicity of the conditional entropy. This then reduces everything to the product case with a uniform $Y$-marginal. The proof method also extends to $A$ quantum, by means of conditional dephasing~\cite{Mark2020}.
  
  \item For the general quantum case, one might then try similar pre-processing ideas as well, making use of the monotonicity of the quantum conditional entropy under $A$-unital operations~\cite{Vempati22}, also known as quantum conditional majorisation~\cite{Brandsen21,Gour18}\,---\,which is also how the above example of $\sigma_{AB}$ diagonal in a maximally entangled basis works.
  
  \item Another approach is to employ divergence centers for the choice of $\omega$ in Theorem~\ref{semicontinuity_relent_thm}. For the quantum relative entropy it is defined as~\cite{PETZ-ENTROPY}
  \bb
  \omega^*(\rho,\sigma)\coloneqq\mathrm{argmin}\max \big\{D(\rho\|\omega),D(\sigma\|\omega)\big\}\,,
  \ee
  for which $D(\sigma_B\|\omega^*_B)-D(\rho_B\|\omega^*_B)=0$. However, it is not a priori clear how to tightly upper bound $D_{\max}(\rho_{AB}\|\pi_A\otimes\omega^*_B)$ for that choice. It is known that~\cite{PETZ-ENTROPY,tomamicheltan14}
  \bb
  \text{$\omega^*(\rho,\sigma)=\lambda\cdot\rho+(1-\lambda)\sigma$ for $\lambda\in[0,1]$}
  \ee
  and one might choose some general $\lambda$ with $\omega_B^\lambda(\rho_B,\sigma_B)$. Bounding
  \bb
  D_{\max}\big(\rho_{AB}\, \big\|\, \pi_A\otimes\omega_B^\lambda\big) \leq \log\left(\lambda^{-1} d^2\right)
  \ee
  and then further optimising the choice of $\lambda\in[0,1]$ against the terms
  \bb
  D\big(\sigma_B\, \big\|\, \omega_B^\lambda\big) - D\big(\rho_B\, \big\|\, \omega_B^\lambda\big)
  \ee
  also leads to some bounds. In fact, this approach leads to the conjectured bound as long as $H_{\min}(A|B)_\rho\geq-\log\big(\lambda|A|\big)$, solidifying the intuition that the most difficult case to handle is when $\rho_{AB}$ is highly entangled.
  
  \item The original proofs~\cite{tightuniform, Alicki-Fannes} on the continuity of the quantum conditional entropy instead work with the max-relative entropy divergence center~\cite{mosonyi11}
  \bb
  \omega^*_{\max}(\rho,\sigma) \coloneqq \mathrm{argmin}\max \big\{D_{\max}(\rho\|\omega),D_{\max}(\sigma\|\omega)\big\}\, ,
  \ee
  where the \deff{max-relative entropy} between two states $\tau$ and $\xi$ on the same separable Hilbert space is defined by~\cite{Datta08}
  \bb
    D_{\max}(\tau\|\xi) \coloneqq \inf\left\{ \lambda:\, \tau\leq 2^\lambda \xi \right\} .
    \label{max_relative_entropy}
  \ee
  The above state $\omega^*_{\max}(\rho,\sigma)$ takes the explicit form
  \bb
  \omega^*_{\max}(\rho,\sigma)=\frac{\rho+2[\rho-\sigma]_+}{1+\frac{1}{2}\|\rho-\sigma\|_1}
  =\frac{\sigma+2[\rho-\sigma]_-}{1+\frac{1}{2}\|\rho-\sigma\|_1}\,.
  \ee
  This choice also leads to bounds in our formalism, which, however, do not seem to be exactly tight either.
\end{itemize}

We remark that Theorem~\ref{equal_marginals_thm} can be generalised with a suitable application of the chain rule, showing that only the marginals that differ contribute to the continuity bound. Namely, for $ABC$ a tripartite quantum system with $A$ of finite dimension $|A|$ and $\rho = \rho_{ABC}$ and $\sigma = \sigma_{ABC}$ two states at trace distance $\frac12 \|\rho-\sigma\|_1 \leq \e \leq 1 - \frac{1}{|A|^2}$ and satisfying $\rho_{BC} = \sigma_{BC}$, we have that
\bb
\left| H(AB|C)_\rho - H(AB|C)_\sigma \right| \leq \e \log \big( |A|^2 - 1\big) + h_2(\e)\, .
\ee
This follows from $H(AB|C) = H(A|BC) + H(B|C)$ together with the fact that the latter entropy has the same value on $\rho_{BC}$ and $\sigma_{BC}$.

Further, this then also implies bounds for entropic quantum channel expressions (cf.~\cite{Leung2009,Shirokov17}). Those often involve so-called regularisations~\cite{MARK}, which make the derivation of the corresponding continuity statements slightly more involved. As an example, we state an improved continuity bound for the output entropy of $n$-fold tensor product channels in terms of the diamond norm, which, for quantum channels $\MM_{A\to B}$ and $\NN_{A\to B}$, is defined as
\bb
\big\|\MM_{A\to B}-\NN_{A\to B}\big\|_{\diamond} \coloneqq \sup_{\ket{\Psi}_{AA'}} \left\|\Big(\big(\MM_{A\to B}-\NN_{A\to B}\big)\otimes I_{A'} \Big) \Big(\ketbra{\Psi}_{AA'}\Big)\right\|_1\,,
\label{diamond_norm}
\ee
where the supremum is over all ancillary systems $A'$ and all pure states $\ket{\Psi}_{AA'}$ on $A'$.

\begin{lemma}\label{lem:output-entropy}
Let $\MM_{A\to B}$ and $\NN_{A\to B}$ be quantum channels, with $B$ finite dimensional and $\frac12\big\|\MM_{A\to B}-\NN_{A\to B}\big\|_{\diamond}\leq\e\leq1-\frac{1}{|B|^2}$. Then, for all reference systems $A'$ and states $\rho_{A^nA'}$ on $A^{\otimes n}A'$ we have that
\bb
\left|S\left(\left(\MM_{A\to B}^{\otimes n}\otimes I_{A'}\right)(\rho_{A^nA'})\right) - S\left(\left(\NN_{A\to B}^{\otimes n}\otimes I_{A'}\right)(\rho_{A^nA'})\right)\right|\leq n\cdot\Big(\e \log \big( |B|^2 - 1\big) + h_2(\e)\Big)\,.
\ee
\end{lemma}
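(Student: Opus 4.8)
The plan is to reduce the $n$-fold statement to $n$ applications of the single-system continuity bound Theorem~\ref{equal_marginals_thm} via a hybrid (telescoping) argument in which the channels $\MM$ are swapped for $\NN$ one tensor factor at a time. For $j = 0, 1, \ldots, n$ set $\xi^{(j)} \coloneqq \big(\NN^{\otimes j}\otimes \MM^{\otimes(n-j)}\otimes I_{A'}\big)(\rho_{A^nA'})$, so that $\xi^{(0)}$ and $\xi^{(n)}$ are precisely the two output states to be compared. The triangle inequality gives $\big|S(\xi^{(0)}) - S(\xi^{(n)})\big| \leq \sum_{j=1}^n \big|S(\xi^{(j-1)}) - S(\xi^{(j)})\big|$, so it suffices to bound each summand by $\e\log(|B|^2-1) + h_2(\e)$.

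For the $j$-th summand, $\xi^{(j-1)}$ and $\xi^{(j)}$ differ only in whether $\MM$ or $\NN$ acts on the $j$-th system; write $B_j$ for the corresponding output factor and $R$ for all remaining systems. The key structural observation is that the two states share the same marginal on $R$: tracing out $B_j$ composes the $j$-th channel with $\Tr_{B_j}$, and since $\MM$ and $\NN$ are both trace-preserving, $\Tr_{B_j}\circ\MM = \Tr_{A_j} = \Tr_{B_j}\circ\NN$, so the $j$-th channel drops out of the reduced state irrespective of which one was applied. This is precisely the equal-marginals hypothesis of Theorem~\ref{equal_marginals_thm}, applied with the conditioned system identified with $B_j$ (dimension $|B|$) and the conditioning system with $R$. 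Because the $R$-marginals coincide, the terms $S(R)$ cancel in $S = H(B_j|R) + S(R)$, giving $S(\xi^{(j-1)}) - S(\xi^{(j)}) = H(B_j|R)_{\xi^{(j-1)}} - H(B_j|R)_{\xi^{(j)}}$.

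To control the trace distance, I would act with $\MM - \NN$ on the single input system $A_j$ and then apply the remaining fixed channels on the other inputs together with the identity on $B_j$; as these remaining channels form a trace-preserving map, contractivity of the trace norm reduces $\frac12\|\xi^{(j-1)} - \xi^{(j)}\|_1$ to at most $\frac12\big\|\big((\MM-\NN)\otimes I\big)(\rho)\big\|_1$, which is bounded by $\frac12\|\MM-\NN\|_\diamond \leq \e$ by the stabilized definition of the diamond norm (all other systems playing the role of the reference). Inserting this trace-distance bound and the equal-marginals property into the symmetric bound~\eqref{eq:main-bound}, with $\min\{|B|,|R|\} = |B|$, yields $\big|S(\xi^{(j-1)})-S(\xi^{(j)})\big| \leq \e\log(|B|^2-1)+h_2(\e)$; summing over $j$ finishes the proof.

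The main obstacle I anticipate is the bookkeeping around the admissible range of $\e$ rather than the telescoping itself: Eq.~\eqref{eq:main-bound} requires the conditioning system to have dimension at least $|B|$ for $\min\{|A|,|B|\}$ to equal $|B|$ and for the constraint $\e \leq 1-\tfrac{1}{|B|^2}$ to match the hypothesis exactly. This is automatic at every step when $n\geq 2$, since $R$ then contains at least one further output factor of dimension $|B|$; only the degenerate case $n=1$ with a reference $A'$ of dimension below $|B|$ needs care, which I would dispatch by enlarging $A'$ with an ancilla in a fixed pure state, leaving all output entropies unchanged while making the conditioning system large enough. A secondary point is that~\eqref{eq:main-bound} is invoked in its two-sided form, which is legitimate since both the equal-marginals condition and the trace-distance estimate are symmetric in $\xi^{(j-1)}$ and $\xi^{(j)}$.
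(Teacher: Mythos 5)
Your proposal is correct and follows essentially the same route as the paper, which simply invokes the telescoping (hybrid) argument of Leung--Smith together with Theorem~\ref{equal_marginals_thm}: you swap $\MM$ for $\NN$ one tensor factor at a time, observe that trace preservation forces the two hybrid states to share their marginal on everything but the $j$-th output, and apply the equal-marginals conditional-entropy bound with the diamond-norm estimate on each step. Your extra care about the admissible range of $\e$ when the conditioning system is small (padding with a pure ancilla) is a legitimate detail the paper leaves implicit.
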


This improves on the previous bound~\cite[Theorem 11]{Leung2009}
\bb
\left|S\left(\left(\MM_{A\to B}^{\otimes n}\otimes I_{A'}\right)(\rho_{A^nA'})\right) - S\left(\left(\NN_{A\to B}^{\otimes n}\otimes I_{A'}\right)(\rho_{A^nA'})\right)\right|\leq n\cdot\Big(4\e \log|B| + 2h_2(\e)\Big)\,.
\ee

\begin{proof}[Proof of Lemma~\ref{lem:output-entropy}]
Following the telescoping argument from~\cite[Theorem 11]{Leung2009} and employing our improved continuity of the quantum conditional entropy for states with equal marginals from Theorem~\ref{equal_marginals_thm}, we immediately deduce the claim.
\end{proof}


\subsection{Improved continuity of entanglement cost} \label{sec:continuity_entanglement_cost}

The \deff{entanglement cost}~\cite{Horodecki-review} is the fundamental entanglement measure that quantifies the amount of pure entanglement that is needed to prepare a given mixed state in the asymptotic limit~\cite{Hayden-EC, Vidal-irreversibility, faithful-EC, EC-infinite} using only local operations and classical communication (LOCC)~\cite{LOCC}. It represents the maximal asymptotic entanglement measure~\cite{Horodecki2000, Donald2002, Horodecki2001}, and it can be computed for any state $\rho=\rho_{AB}$ of a bipartite quantum system $AB$ as
\begin{align}
E_c(\rho) =&\ \lim_{n\to\infty} \frac1n\, E_f\big(\rho^{\otimes n}\big)\, , \label{entanglement_cost} \\
E_f(\rho) \coloneqq&\ \inf_{\{p_x,\psi_x\}_x:\ \sum_x p_x \psi_x^{AB} \,=\, \rho_{AB}} \sum_x p_x\, S\big(\psi_x^A\big)\, . \label{entanglement_of_formation}
\end{align}
Throughout this section, we are going to assume that at least one between $A$ and $B$ is finite dimensional. The measure in~\eqref{entanglement_of_formation} is called the \deff{entanglement of formation}. The infimum in its definition runs over all decompositions of $\rho$ as a convex combination of pure states $\psi_x^{AB} = \ketbra{\psi_x}_{AB}$. For a precise operational definition of the entanglement cost, we refer the reader to~\cite{Hayden-EC} or to the explanation below~\eqref{rate}. However, for the sake of this discussion we could also take directly~\eqref{entanglement_cost} as a definition.

A continuity relation for the entanglement of formation in~\eqref{entanglement_of_formation} was proved already in 2000 by Nielsen~\cite{Nielsen2000} and then later refined in~\cite{Mark2020} based on the tight continuity bounds for the classical conditional entropy from~\cite{Alhejji2019}. The analysis of the continuity properties of the entanglement cost~\eqref{entanglement_cost} proved significantly more complicated, mainly due to the presence of the regularisation, i.e.\ the limit $n\to\infty$ over the number of copies, in its definition. The first continuity relation was established in 2015 by Winter~\cite[Corollary~4]{tightuniform}, and proving it required considerable technical progress~\cite[Proposition~5]{tightuniform}, which requires conditioning on quantum systems. Here, we show how to use our new Theorem~\ref{semicontinuity_relent_thm} to provide an improvement over Winter's result.

\begin{prop}
\label{prop:E-cost}
Let $AB$ be a bipartite quantum system, with either $A$ or $B$ finite dimensional. Let $\rho_{AB}$ and $\sigma_{AB}$ be two states on $AB$ whose trace distance satisfies
\bb
\frac12 \left\|\rho_{AB} - \sigma_{AB}\right\|_1 \leq \e \leq 1 - \frac{\sqrt{2d^2-1}}{d^2}\, .
\ee
Then, setting $\delta \coloneqq \sqrt{\e(2-\e)}$ and $d\coloneqq \min\{|A|,\,|B|\} <\infty$, we have that
\bb
\left| E_c(\rho_{AB}) - E_c(\sigma_{AB}) \right| \leq \delta \log\big(d^2 -1\big) + h_2(\delta)\, .
\ee
\end{prop}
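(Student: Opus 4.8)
The plan is to follow the architecture of Winter's continuity proof for the entanglement cost~\cite{tightuniform}, replacing the Alicki--Fannes--Winter conditional-entropy estimate used there with our sharper equal-marginal bound, Theorem~\ref{equal_marginals_thm}. The first step is to trade the trace distance for a purified distance. Since $\frac12\|\rho_{AB}-\sigma_{AB}\|_1\leq\e$ gives $F(\rho_{AB},\sigma_{AB})\geq 1-\e$, Uhlmann's theorem produces purifications $\ket{\psi}_{ABE}$ and $\ket{\phi}_{ABE}$ on a common reference $E$ with $\frac12\big\||\psi\rangle\!\langle\psi|-|\phi\rangle\!\langle\phi|\big\|_1\leq\sqrt{1-F^2}\leq\sqrt{\e(2-\e)}=\delta$. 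A short computation shows that the stated range $\e\leq 1-\frac{\sqrt{2d^2-1}}{d^2}$ is exactly equivalent to $\delta\leq 1-\frac{1}{d^2}$, which is precisely the hypothesis under which Theorem~\ref{equal_marginals_thm} (through Theorem~\ref{semicontinuity_relent_thm}) applies with $M=d^2$.

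The second step expresses the entanglement of formation as a conditional entropy. Every pure-state decomposition of $\rho_{AB}$ arises from a measurement channel $\Lambda\colon E\to X$ applied to $\ket{\psi}_{ABE}$, and recording the outcome in a register $X$ yields $E_f(\rho_{AB})=-\max_\Lambda H(S|S'X)$, where $\{S,S'\}=\{A,B\}$ with $|S|=d=\min\{|A|,|B|\}$ (using $S(\psi_x^A)=S(\psi_x^B)$ for pure states). The key dimensional input is that every $\ket{\psi_x}_{AB}$ has Schmidt rank at most $d$, so the classical--quantum extension $\sum_x p_x\ketbra{x}_X\otimes\psi_x^{SS'}$ has Schmidt number at most $d$ across the cut $S:S'X$; combined with $|S|=d$ this is exactly what makes the relevant max-relative-entropy parameter equal to $d^2$ in Theorem~\ref{equal_marginals_thm}.

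The heart of the argument, and the step I expect to be the main obstacle, is the regularisation $E_c(\rho)=\lim_n\frac1n E_f(\rho^{\otimes n})$. One cannot simply feed $\rho^{\otimes n}$ and $\sigma^{\otimes n}$ into a single-copy bound: their fidelity is $F^n\to 0$, so they become perfectly distinguishable and any such estimate collapses to the trivial $\log d^2$. The remedy, following Winter's Proposition~5 in~\cite{tightuniform}, is to swap one tensor factor $\sigma\mapsto\rho$ at a time and to purify the remaining $n-1$ factors once by a common vector; then in each telescoping step the two relevant extensions sit at trace distance only $\delta$ rather than at the tensorised distance. Transporting the optimal decomposition of the $\sigma$-side state to the $\rho$-side and applying the above conditional-entropy estimate bounds each step by $\delta\log(d^2-1)+h_2(\delta)$; summing the $n$ steps and dividing by $n$ leaves this increment unchanged, and exchanging $\rho\leftrightarrow\sigma$ supplies the absolute value.

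The genuine difficulty concentrated in this last step is to ensure that only the $d^2$-dimensional changed factor contributes, while the large (possibly infinite-dimensional) bystander register and the growing global dimension are prevented from inflating the estimate. This is exactly where Theorem~\ref{equal_marginals_thm} is indispensable and improves on the older bounds: being completely independent of the dimension of the conditioning system, it allows the bystander to be carried along as a quantum conditioning register on which the two states share identical marginals, so that its dimension never enters. Making this transport rigorous---verifying that the decomposition can be moved from one state to the other with the conditioning marginal held fixed while only the changed factor remains active---is the technical crux on which the whole proposition rests.
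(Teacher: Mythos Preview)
Your proposal is correct and follows exactly the same approach as the paper: run Winter's argument for the continuity of the entanglement cost from~\cite[Corollary~4]{tightuniform} verbatim, but at the final conditional-entropy estimate replace the Alicki--Fannes--Winter bound by Theorem~\ref{equal_marginals_thm}, which is applicable because the two extended states $\widetilde{\rho}$ and $\widetilde{\sigma}$ share the same marginal on the (quantum) conditioning register $A'X$. Your verification that the stated range of $\e$ is equivalent to $\delta\leq 1-\tfrac{1}{d^2}$, and your identification of the Schmidt-number mechanism yielding $M=d^2$, match the paper's reasoning.
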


This improves on the previous bound
\bb
\left| E_c(\rho_{AB}) - E_c(\sigma_{AB}) \right| \leq \delta \log\big(d^2\big) + (1+\delta)\cdot h_2\left(\frac{\delta}{1+\delta}\right)\
\ee
from~\cite[Corollary~4]{tightuniform}.

\begin{proof}[Proof of Proposition~\ref{prop:E-cost}]
The key observation is that Theorem~\ref{equal_marginals_thm} suffices to improve the continuity estimate in the original proof of the (asymptotic) continuity of the entanglement cost, given by Winter~\cite[Corollary~4]{tightuniform}. With Winter's notation, we can explain the idea very concisely as follows. The entire proof of~\cite[Corollary~4]{tightuniform} goes through, but in the last line of the last equation we apply our Theorem~\ref{equal_marginals_thm} instead of~\cite[Lemma~2]{tightuniform}. This is possible because $\widetilde{\rho}$ and $\widetilde{\sigma}$ have the same marginals on $A'X$ (in fact, even on $A'B'X$), as Winter observes in the equation before. Also, note that because of the restriction on the range of $\e$ we have that $\delta \leq 1 - \frac{1}{d^2}$.
\end{proof}

\begin{rem}
The technical reason why our fully quantum continuity relation is needed in the proof of continuity of the entanglement cost, but not in that of the entanglement of formation, is that in the aforementioned last equation of the proof of~\cite[Corollary~4]{tightuniform} the system $A'$ is un-measured and thus quantum. This does not happen in the proof of continuity for the entanglement of formation, as in that case the conditioning system is fully measured (cf.\ the first part of the proof of~\cite[Corollary~4]{tightuniform}).
\end{rem}


\subsection{Approximate degradability} \label{sec:approximate_degradability}

Let $\NN_{A\to B}$ be a quantum channel with input system $A$ and output system $B$. Let $V_{A\to BE}$ be one of its Stinespring dilations, so that $\NN_{A\to B}(\rho_A) = \Tr_E V_{A\to BE}^{\vphantom{\dag}} \rho_A V_{A\to BE}^\dag$ for all states $\rho_A$ on $A$. We denote with $\NN_{A\to E}^c$ the associated complementary channel, that acts as $\NN_{A\to E}^c(\rho_A) = \Tr_B V_{A\to BE}^{\vphantom{\dag}} \rho_A V_{A\to BE}^\dag$. For some $\e\in [0,1]$, we say that a channel $\Theta_{B\to \etilde}$, where $\etilde\simeq E$ is a copy of the quantum system $E$, is an \deff{$\boldsymbol{\e}$-degrading channel} for $\NN_{A\to B}$ if it holds that~\cite{Sutter-VV}
\bb
\frac12 \left\|\NN^c_{A\to E} - \iota_{\etilde \to E}\circ \Theta_{B\to \etilde} \circ \NN_{A\to B} \right\|_\diamond \leq \e\, ,
\ee
where $\iota_{\etilde\to E}$ is the embedding of $\etilde \simeq E$ into $E$, and the diamond norm is given by~\eqref{diamond_norm}.

The notion of approximate degradability was introduced to study quantum capacities~\cite{Leung2009,Sutter-VV,Leditzky18}. To compute these latter quantities, a useful notion is a suitably modified version of the coherent information. We remind the reader that the coherent information of a channel $\NN = \NN_{A\to B}$ with complementary channel $\NN^c = \NN^c_{A\to E}$ is given by
\bb
I_c \big(\NN\big) \coloneqq&\ \sup_{\rho = \rho_A} \left\{ S\big(\NN(\rho)\big) - S\big(\NN^c(\rho)\big) \right\}\,,
\label{coherent_information}
\ee
and due to the Lloyd--Shor--Devetak theorem~\cite{Lloyd-S-D, L-Shor-D, L-S-Devetak, temaconvariazioni}, the quantum capacity can be written as
\bb
Q\big(\NN\big) = \lim_{n\to\infty} \frac1n\, I_c\big( \NN^{\otimes n}\big) = \sup_{n\in \N^+} \frac1n\, I_c\big( \NN^{\otimes n}\big)\, .
\label{LSD}
\ee

For degradable (i.e.\ `$0$-approximate degradable') channels, the above regularisation is unnecessary, because the coherent information turns out to be additive; one thus obtains the single-letter formula $Q\big(\NN_{A\to B}\big) = I_c \big(\NN_{A\to B}\big)$. In the case of approximately degradable channels, the coherent information in general ceases to be additive and can be strictly super-additive, entailing that the regularisation in~\eqref{LSD} cannot be removed.

To obtain an additive quantity, one needs to consider a suitably modified version of the coherent information. Let $\Theta_{B\to \etilde}$ be an $\e$-approximate degrading map for $\NN_{A\to B}$. Then we can imagine to modify~\eqref{coherent_information} by replacing the complementary channel $\NN^c_{A\to E}$ on the right-hand side by its approximate version $\Theta_{B\to \etilde} \circ \NN_{A\to B}$ (we can safely ignore the isometric embedding $\iota_{\etilde\to E}$ if we only compute entropies). By doing so, one obtains the quantity~\cite[Eq.~(5)]{Sutter-VV}
\bb
U_\Theta(\NN) \coloneqq&\ \sup_{\rho = \rho_A} \left\{ S\big(\NN(\rho)\big) - S\big((\Theta\circ \NN)(\rho)\big) \right\} \\
=&\ \sup_{\rho_A} H(F|\tilde{E})\, ,
\label{U_Theta}
\ee
where $F$ is an environment for a Stinespring dilation $W_{B\to \etilde F}$ of $\Theta_{B\to F}$. We can compare directly~\eqref{U_Theta} and~\eqref{coherent_information} by re-writing the latter as
\bb
I_c(\NN) = \sup_{\rho_A} \left\{ H\big(\etilde F\big) - H(E) \right\} .
\label{coherent_information_systems}
\ee
Remarkably, the quantity in~\eqref{U_Theta} is additive, meaning that~\cite[Lemma~3.3]{Sutter-VV}
\bb
U_{\Theta_1\otimes \Theta_2}\big(\NN_1\otimes \NN_2\big) = U_{\Theta_1}(\NN_1) + U_{\Theta_2}(\NN_2)
\label{additivity_U_Theta}
\ee
holds for all choices of channels $\NN_1,\NN_2$ and $\Theta_1,\Theta_2$.

We are now ready to state the main result of this section.

\begin{prop} \label{bound_Q_prop}
Let $\NN$ be a quantum channel, and let $\Theta$ be an $\e$-degrading channel for $\NN$. Provided that $\e\leq 1 - \frac{1}{|E|^2}$, where $|E|$ is the dimension of the environment of a Stinespring dilation of $\NN$, the quantum capacity $Q(\NN)$ of $\NN$ can be bounded as
\bb
I_c(\NN) \leq Q(\NN) \leq U_\Theta(\NN) + \e \log\big( |E|^2 - 1 \big) + h_2(\e)\, ,
\label{bound_Q_with_U_Theta}
\ee
where $U_\Theta(\NN)$ is defined by~\eqref{U_Theta}. Furthermore, if $\e\leq 1-\frac{1}{|E|}$, then it also holds that
\bb
I_c(\NN) \leq Q(\NN) \leq I_c(\NN) + \e \log\big[ \big(|E| - 1)^2(|E|+1)\big] + 2h_2(\e)\, ,
\label{bound_Q_with_coherent_info}
\ee
\end{prop}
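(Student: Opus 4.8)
The plan is to derive both bounds in Proposition~\ref{bound_Q_prop} from the Lloyd--Shor--Devetak formula~\eqref{LSD}, the additivity of $U_\Theta$ in~\eqref{additivity_U_Theta}, and our continuity estimate for the conditional entropy with equal marginals from Theorem~\ref{equal_marginals_thm}. The lower bound $I_c(\NN) \leq Q(\NN)$ is immediate from~\eqref{LSD} upon taking $n = 1$, so all the work lies in the two upper bounds.

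**First I would** prove~\eqref{bound_Q_with_U_Theta}. The strategy is to compare $I_c(\NN^{\otimes n})$ with $U_{\Theta^{\otimes n}}(\NN^{\otimes n})$ for every $n$ and then use additivity to single-letterise. Fix a pure input $\ket{\Psi^n}_{A^n A'}$ and write both quantities in the systems-based form of~\eqref{coherent_information_systems} and~\eqref{U_Theta}: the coherent information reads $I_c(\NN^{\otimes n}) = \sup -H(A'|B^n)$ on the channel output, whereas $U_{\Theta^{\otimes n}}(\NN^{\otimes n}) = \sup H(F^n | \etilde^{\,n})$ on the dilated output. The key is that applying the genuine complementary channel $\NN^{c,\otimes n}$ versus the approximate one $\Theta^{\otimes n}\circ\NN^{\otimes n}$ produces two states that, by the $\e$-degrading condition and tensorisation of the diamond norm, are at trace distance at most $n\e$ after $n$ uses\,---\,but crucially, since both arise from the \emph{same} input state fed through the \emph{same} channel $\NN^{\otimes n}$ on the $B^n$ side, the two global states \emph{share the marginal on the output system} that plays the role of the conditioning register. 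This is exactly the equal-marginals hypothesis of Theorem~\ref{equal_marginals_thm}. Invoking it with $|A|$ replaced by the environment dimension $|E|^{\otimes n}$ (so $\log(|E|^{2n}-1)$ appears, but one must instead telescope use-by-use to keep the dimension at $|E|$ per factor) bounds $\frac1n\big(I_c(\NN^{\otimes n}) - U_{\Theta^{\otimes n}}(\NN^{\otimes n})\big)$ by $\e\log(|E|^2-1) + h_2(\e)$; additivity~\eqref{additivity_U_Theta} collapses $U_{\Theta^{\otimes n}}(\NN^{\otimes n}) = n\, U_\Theta(\NN)$, and taking $n\to\infty$ in~\eqref{LSD} yields~\eqref{bound_Q_with_U_Theta}.

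**For the second bound**~\eqref{bound_Q_with_coherent_info} I would relate $U_\Theta(\NN)$ back to $I_c(\NN)$ directly. Comparing~\eqref{U_Theta} written as $H(F|\etilde)$ with~\eqref{coherent_information_systems} written as $H(\etilde F) - H(E)$, the difference $U_\Theta(\NN) - I_c(\NN)$ is a difference of conditional entropies on states that again differ only through the action of $\Theta$ versus the identification of $\etilde$ with $E$, hence are $\e$-close in trace distance and share the appropriate marginal. A second application of Theorem~\ref{equal_marginals_thm}, now with the dimension of the single environment $\etilde \simeq E$ entering as $|E|$ rather than $|E|^2$, bounds this difference; combining it with~\eqref{bound_Q_with_U_Theta} and collecting the logarithmic terms, using $\log(|E|^2-1) + \log((|E|-1)) = \log\!\big[(|E|-1)^2(|E|+1)\big]$ via the factorisation of $|E|^2-1$, produces the stated $\e\log\big[(|E|-1)^2(|E|+1)\big] + 2h_2(\e)$ under the tighter range $\e \leq 1 - 1/|E|$.

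**The main obstacle** I anticipate is bookkeeping the dimensions and marginals correctly so that Theorem~\ref{equal_marginals_thm} applies use-by-use with per-factor dimension $|E|$ rather than the blown-up $|E|^n$\,---\,the naive global application would give a vacuous dimension-dependence and destroy the single-letter character of the final bound. This is precisely where the telescoping argument (as in the Lemma preceding~\eqref{eq:quantum-capacity}) and the additivity~\eqref{additivity_U_Theta} must be combined so that each step contributes only a single factor of $\e\log(|E|^2-1)+h_2(\e)$; verifying that the relevant two states genuinely have the equal marginal demanded by the theorem at each stage, and that the diamond-norm degrading bound transfers to a trace-distance bound on the concrete output states, is the delicate part of the argument.
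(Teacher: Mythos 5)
Your plan is correct and follows essentially the same route as the paper: the telescoping decomposition of $H\big(\etilde^n\big)-H(E^n)$ into per-use differences of conditional entropies, each involving two states that differ by a single application of $\NN^c$ versus $\Theta\circ\NN$ (hence $\e$-close with equal marginals on the conditioning register of dimension $|E|^{n-1}$ but conditioned system of dimension $|E|$), followed by Theorem~\ref{equal_marginals_thm}, the additivity~\eqref{additivity_U_Theta}, and the limit $n\to\infty$. The only cosmetic difference is that for~\eqref{bound_Q_with_coherent_info} the paper simply cites the known continuity relation $|I_c(\NN)-U_\Theta(\NN)|\leq\e\log(|E|-1)+h_2(\e)$ from Sutter et al., which is exactly the Fannes--Audenaert comparison of $H(E)$ and $H\big(\etilde\big)$ that you propose to re-derive.
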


This improves on the previous bound in~\cite[Theorem~3.4]{Sutter-VV}, which featured worse fudge terms.

\begin{proof}[Proof of Proposition~\ref{bound_Q_prop}]
We follow the main ideas employed in the proof of~\cite[Theorem~3.4, (i)~and~(ii)]{Sutter-VV}, starting with~\eqref{bound_Q_with_U_Theta}. The first inequality is trivial, and follows by direct inspection of~\eqref{LSD}. As for the second one, pick a positive integer $n\in \N^+$, and write as in~\cite[proof of Theorem~3.4(ii)]{Sutter-VV}
\bb
H\big(\etilde^n F^n\big) - H(E^n) &= H\big(F^n|\etilde^n\big) + H\big(\etilde^n\big) - H(E^n) \\
&\eqt{(i)} H\big(F^n|\etilde^n\big) + \sum_{t=1}^n \left( H\big(\etilde_{\leq t} E_{>t}\big) - H\big(\etilde_{<t} E_{\geq t}\big) \right) \\
&\eqt{(ii)} H\big(F^n|\etilde^n\big) + \sum_{t=1}^n \left( H\big(\etilde_t \big| \etilde_{<t} E_{>t}\big) - H\big(E_t \big| \etilde_{<t} E_{>t}\big) \right) .
\label{bound_Q_proof_eq1}
\ee
Here, we used the shorthand notation $H\big(\etilde_{\leq t} E_{>t}\big) \coloneqq H\big(\etilde_1\ldots \etilde_t E_{t+1}\ldots E_n\big)$, and similarly $H\big(\etilde_{<t} E_{\geq t}\big) \coloneqq H\big(\etilde_1\ldots \etilde_{t-1}E_t\ldots E_n\big)$. We used a telescopic identity in~(i) and the chain rule in~(ii). To continue, we observe that the state of the system $\etilde_{\leq t} E_{>t}$ is of the form
\bb
\sigma^{(n,t)} = \sigma^{(n,t)}_{\etilde_{\leq t} E_{>t}} \coloneqq \left((\Theta\circ \NN)^{\otimes t} \otimes (\NN^c)^{\otimes (n-t)}\right)(\omega^n)\, ,
\ee
where $\omega^n = \omega^n_{A^n}$ is the initial state. The state of the system $\etilde_{<t} E_{\geq t}$ can instead be written as
\bb
\rho^{(n,t)} = \rho^{(n,t)}_{\etilde_{<t} E_{\geq t}} \coloneqq \left((\Theta\circ \NN)^{\otimes (t-1)} \otimes (\NN^c)^{\otimes (n-t+1)}\right)(\omega^n)\, .
\ee
Identifying $\etilde \simeq E$, we have that
\bb
\frac12 \left\|\rho^{(n,t)} - \sigma^{(n,t)}\right\|_1 \leq \frac12 \left\| \Theta\circ \NN - \NN^c \right\|_\diamond \leq \e \leq 1 - \frac{1}{|E|^2}\, ,
\ee
where $\etilde_{<t}E_{>t}$ plays the role of the ancillary system $A'$ in~\eqref{diamond_norm}. 
Note that
\bb
\rho^{(n,t)}_{\etilde_{<t} E_{>t}} = \Tr_{E_t} \rho^{(n,t)}_{\etilde_{<t} E_{\geq t}} = \left((\Theta\circ \NN)^{\otimes (t-1)} \otimes (\NN^c)^{\otimes (n-t)}\right)(\omega^n) = \Tr_{\etilde_t} \sigma^{(n,t)}_{\etilde_{\leq t} E_{>t}} = \sigma^{(n,t)}_{\etilde_{<t} E_{>t}}\, ,
\ee
i.e.\ the marginals of $\rho^{(n,t)}$ and $\sigma^{(n,t)}$ on $\etilde_{<t}E_{>t}$ coincide. 
This implies, via Theorem~\ref{equal_marginals_thm}, that
\bb
H\big(\etilde_t \big| \etilde_{<t} E_{>t}\big) - H\big(E_t \big| \etilde_{<t} E_{>t}\big) &= H\big(\etilde_t \big| \etilde_{<t} E_{>t}\big)_{\sigma^{(n,t)}} - H\big(E_t \big| \etilde_{<t} E_{>t}\big)_{\rho^{(n,t)}} \\
&\leq \e \log\big( |E|^2 - 1 \big) + h_2(\e)\, .
\label{bound_Q_proof_eq5}
\ee

Now, plugging~\eqref{bound_Q_proof_eq5} into~\eqref{bound_Q_proof_eq1}, optimising over the initial state $\omega^n$, and dividing by $n$ yields the inequality
\bb
\frac1n\, I_c\big(\NN^{\otimes n}\big) &\leq \frac1n\, U_{\Theta^{\otimes n}}\big(\NN^{\otimes n}\big) + \e \log\big( |E|^2 - 1 \big) + h_2(\e) \\
&\eqt{(iii)}\, U_\Theta(\NN) + \e \log\big( |E|^2 - 1 \big) + h_2(\e)\, , 
\label{bound_Q_proof_eq7}
\ee
where (iii)~follows from the additivity of $U_\Theta$ in~\eqref{additivity_U_Theta}. Taking the limit $n\to\infty$ in~\eqref{bound_Q_proof_eq7} yields~\eqref{bound_Q_with_U_Theta}, according to~\eqref{LSD}. 

As for~\eqref{bound_Q_with_coherent_info}, it is obtained precisely as in~\cite[Theorem~3.4(i)]{Sutter-VV}, by combining the above statement with the continuity relation that links $I_c(\NN)$ and $U_\Theta(\NN)$. This takes the form~\cite[Proposition~3.2]{Sutter-VV}
\bb
\left|I_c(\NN) - U_\Theta(\NN)\right| \leq \e \log(|E|-1) + h_2(\e)\, ,
\ee
provided that $\e\leq 1-\frac{1}{|E|}$.
\end{proof}

\begin{rem}
It is possible to obtain tighter versions of the relations~\eqref{bound_Q_with_U_Theta} and~\eqref{bound_Q_with_coherent_info} by using the full power of our main inequality~\eqref{semicontinuity_relent}. To this end, one needs to introduce two channel divergences, the \deff{un-stabilised} and the \deff{stabilised max-relative entropy}, respectively given by
\begin{align}
\widebar{D}_{\max}(\Lambda_1\|\Lambda_2) &\coloneqq \sup_\rho D_{\max}\big(\Lambda_1(\rho)\,\big\|\, \Lambda_2(\rho)\big)\, , \label{unstabilised_max_relent_channels} \\
D_{\max}(\Lambda_1\|\Lambda_2) &\coloneqq \sup_\rho D_{\max}\big((\Lambda_1\otimes I)(\rho)\,\big\|\, (\Lambda_2 \otimes I)(\rho)\big) \label{stabilised_max_relent_channels}
\end{align}
for any two quantum channels $\Lambda_1,\Lambda_2$ with the same input and output systems. Here, the max-relative entropy is defined by~\eqref{max_relative_entropy}, and the optimisation in~\eqref{unstabilised_max_relent_channels} is over all states on the input system, while in~\eqref{stabilised_max_relent_channels} we included also an ancillary system (a procedure known as `stabilisation'). Importantly, due to~\cite[Lemma~12]{Berta2018} the stabilised max-relative entropy can be computed via the semi-definite program
\bb
D_{\max}(\Lambda_1\|\Lambda_2) = \min\left\{ \gamma:\ 2^{\gamma} \Lambda_2 - \Lambda_1 \in \mathrm{CP}\right\} ,
\label{stabilised_max_relent_channels_Choi_states}
\ee
where $\mathrm{CP}$ denotes the set of completely positive maps. 

With this notation, we can state the following refinements of~\eqref{bound_Q_with_U_Theta} and~\eqref{bound_Q_with_coherent_info}: \emph{provided that $\e \leq 1 - 2^{- D_{\max}(\Theta \circ \NN\,\|\,\pi\Tr)}$, where $(\pi \Tr)(X) \coloneqq \pi_E \Tr X$ is the completely depolarising channel, we have that}
\bb
I_c(\NN) \leq Q(\NN) &\leq U_\Theta(\NN) + \e \log\left( 2^{D_{\max}(\Theta \circ \NN\,\|\,\pi\Tr)} - 1 \right) + h_2(\e)\, .
\ee
Moreover, \emph{if also the condition $\e \leq 1 - 1/M$ is satisfied for some $M\geq 2^{\widebar{D}_{\max}(\Theta \circ \NN\,\|\,\pi\Tr)}$ (for example, one could set $M = |E|$), then}
\bb
I_c(\NN) \leq Q(\NN) &\leq I_c(\NN) + \e \log\left[\left( M - 1 \right)\left( 2^{D_{\max}(\Theta \circ \NN\,\|\,\pi\Tr)} - 1 \right)\right] + 2h_2(\e)\, .
\ee
\end{rem}

All of these approximate degradability bounds can be evaluated both numerically and analytically on concrete quantum channels (cf.~\cite{Sutter-VV,Leditzky18}), and we note that our methods can be combined with other ways of extracting bounds (see, e.g., the latest~\cite{Jabbour23}).


\subsection{Improved continuity of filtered entropies} \label{sec:filtered_entropies}

Our goal in this section is to use the above Theorem~\ref{semicontinuity_relent_thm} to prove the following conjecture of Christandl, Ferrara, and Lancien~\cite[Conjecture~7]{random-private}, formulated below in a general fashion that the interested reader can find discussed in~\cite{random-private} below their statement of the problem.

\begin{cj}[{\cite{random-private}}] \label{random_private_cj}
Let $\FF\subseteq \D(\HH)$ be a set of states star-shaped around the maximally mixed state in dimension $D = \dim(\HH) < \infty$, and let $\LL\subseteq \cptp(\HH\to \HH')$ be any set of channels. Then, does there exist a universal constant $\kappa$ and a function $g:[0,1]\to \R_+$ with $\lim_{\e\to 0^+} g(\e) = 0$ such that for every $\rho,\sigma \in \D(\HH)$ satisfying
\bb
\frac12 \|\rho - \sigma\|_\LL \leq \e
\ee
we have that
\bb
\Big| D^\LL(\rho\|\FF) - D^\LL(\sigma\|\FF)\Big| \leqt{?} \kappa \e \log D +g(\e)\, .
\label{conjectured_continuity}
\ee
\end{cj}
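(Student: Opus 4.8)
The plan is to reduce the statement to our dimension-independent semi-continuity bound, Theorem~\ref{semicontinuity_relent_thm}, applied after filtering through each channel in $\LL$, with the star-shapedness of $\FF$ supplying the crucial control on the max-relative entropy. Recall that the filtered quantities are $\|\rho-\sigma\|_\LL = \sup_{\Lambda\in\LL}\|\Lambda(\rho)-\Lambda(\sigma)\|_1$ and $D^\LL(\rho\|\FF) = \inf_{\varsigma\in\FF}\sup_{\Lambda\in\LL} D(\Lambda(\rho)\|\Lambda(\varsigma))$, and write $\pi \coloneqq \id_\HH/D$ for the maximally mixed state; since $\FF$ is star-shaped around $\pi$ we have $\pi\in\FF$ and $(1-t)\pi + t\varsigma\in\FF$ for every $\varsigma\in\FF$ and $t\in[0,1]$. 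It suffices to bound $D^\LL(\rho\|\FF) - D^\LL(\sigma\|\FF)$ from above, since the hypothesis $\frac12\|\rho-\sigma\|_\LL\le\e$ is symmetric in $\rho,\sigma$ and the two resulting bounds combine into the absolute value.

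First I would fix a (near-)optimal free state $\varsigma^\ast\in\FF$ for $\sigma$, so that $\sup_{\Lambda} D(\Lambda(\sigma)\|\Lambda(\varsigma^\ast)) = D^\LL(\sigma\|\FF)$ up to arbitrarily small slack, and interpolate it toward the maximally mixed state by setting $\varsigma_t \coloneqq (1-t)\pi + t\varsigma^\ast\in\FF$ for a parameter $t\in(0,1)$ to be chosen. The interpolation serves two purposes. On the one hand, $\varsigma_t \ge \frac{1-t}{D}\,\id_\HH \ge \frac{1-t}{D}\,\rho$, so for every $\Lambda\in\LL$ positivity of $\Lambda$ gives $\Lambda(\rho)\le \frac{D}{1-t}\Lambda(\varsigma_t)$, that is, $D_{\max}(\Lambda(\rho)\|\Lambda(\varsigma_t))\le \log\tfrac{D}{1-t}$. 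Crucially, the dimension entering here is that of the \emph{input} space $\HH$, not of the possibly huge or infinite-dimensional output space, which is exactly why the dimension-independence of Theorem~\ref{semicontinuity_relent_thm} is indispensable. Applying that theorem (in the simplified form of Remark~\ref{simplified_semicontinuity_relent_rem}, with $M=\tfrac{D}{1-t}$) to the triple $(\Lambda(\rho),\Lambda(\sigma),\Lambda(\varsigma_t))$, whose first two entries satisfy $\frac12\|\Lambda(\rho)-\Lambda(\sigma)\|_1\le\frac12\|\rho-\sigma\|_\LL\le\e$, yields for every $\Lambda$
\[
D(\Lambda(\rho)\|\Lambda(\varsigma_t)) \le D(\Lambda(\sigma)\|\Lambda(\varsigma_t)) + \e\log\tfrac{D}{1-t} + h_2(\e).
\]
On the other hand, $\varsigma_t\ge t\varsigma^\ast$ together with operator monotonicity of the logarithm gives $D(\Lambda(\sigma)\|\Lambda(\varsigma_t))\le D(\Lambda(\sigma)\|t\Lambda(\varsigma^\ast)) = D(\Lambda(\sigma)\|\Lambda(\varsigma^\ast)) + \log\tfrac1t$, quantifying the price paid for moving the reference away from the optimizer.

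Taking the supremum over $\Lambda\in\LL$ in both estimates — the additive constants pass through the supremum unchanged — and using $D^\LL(\rho\|\FF)\le \sup_\Lambda D(\Lambda(\rho)\|\Lambda(\varsigma_t))$ because $\varsigma_t\in\FF$, I obtain
\[
D^\LL(\rho\|\FF) - D^\LL(\sigma\|\FF) \le \log\tfrac1t + \e\log\tfrac{D}{1-t} + h_2(\e) = \e\log D + \big(-\log t - \e\log(1-t)\big) + h_2(\e).
\]
It then remains to optimize the free parameter: minimizing $-\log t - \e\log(1-t)$ over $t\in(0,1)$ gives $t=\tfrac{1}{1+\e}$ with value $(1+\e)\log(1+\e) - \e\log\e$. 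This proves the conjecture with universal constant $\kappa=1$ — better than merely some constant — and the explicit, dimension-free remainder
\[
g(\e) = (1+\e)\log(1+\e) - \e\log\e + h_2(\e),
\]
which is non-negative and satisfies $\lim_{\e\to0^+}g(\e)=0$.

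I expect the only real subtlety to be the balancing just described: star-shapedness lets us pull the reference toward $\pi$ to tame the max-relative entropy, but each step toward $\pi$ costs a $\log\tfrac1t$ term, and it is precisely the optimization over $t$ that converts these two competing effects into the clean $\e\log D + g(\e)$ form. A secondary point to verify, rather than a genuine obstacle, is that all quantities are automatically finite: since $\pi\in\FF$ and $\Lambda(\rho)\le D\,\Lambda(\pi)$, one has $D^\LL(\rho\|\FF)\le\log D$ for every state, so no $\infty-\infty$ ambiguities arise. The same argument goes through verbatim if $D^\LL$ is instead defined with the supremum over $\Lambda$ placed outside the infimum over $\varsigma$, by fixing a near-optimal $\Lambda$ for $\rho$ first.
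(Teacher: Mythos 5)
Your proposal is correct and follows essentially the same route as the paper's proof via Lemma~\ref{restricting_sigma_lemma} and Proposition~\ref{filtered_semicontinuity_prop}: interpolating the near-optimal free state toward the star centre $\pi$, controlling $D_{\max}$ through $\varsigma_t \geq \frac{1-t}{D}\id$, applying Theorem~\ref{semicontinuity_relent_thm} in the simplified form of Remark~\ref{simplified_semicontinuity_relent_rem}, paying the $\log\frac1t$ penalty by operator monotonicity, and optimising the interpolation weight (your $t$ is the paper's $1-q$, and the optimisation yields the identical $g(\e) = (1+\e)\log(1+\e)-\e\log\e$). The only difference is presentational: the paper factors the argument through a general intermediate proposition valid for arbitrary star centres and infinite dimensions, whereas you prove the finite-dimensional corollary directly.
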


Here, we defined
\begin{align}
\|X\|_\LL &\coloneqq \sup_{\Lambda\in \LL} \left\|\Lambda(X)\right\|_1\, , \label{L_norm} \\
D^\LL(\rho\|\omega) &\coloneqq \sup_{\Lambda\in \LL} D\big(\Lambda(\rho)\,\big\|\, \Lambda(\omega)\big) , \label{L_relent_two_states} \\
D^\LL(\rho\|\FF) &\coloneqq \inf_{\omega \in \FF} D^\LL(\rho\|\omega)\, . \label{L_relent_resource}
\end{align}
In essence,~\eqref{conjectured_continuity} is a sort of asymptotic continuity statement, but stronger, because it is given with respect to the `$\LL$-filtered' norm $\|\cdot\|_\LL$ instead of the full trace norm $\|\cdot\|_1$. If the full trace norm is considered, then asymptotic continuity is known, e.g.\ from~\cite[Theorem~11]{Schindler2023}.

The above conjecture is known to be true when $\LL$ is a set of measurements~\cite[Proposition~3]{rel-ent-sq}, or a specific subset of channels~\cite[Proposition~6]{random-private}. We will see that we can tackle the case of general sets of channels. Before we state the general result that will allow us to solve the conjecture, let us fix some terminology. An alternative measure of $\LL$-filtered distance from a set of states $\FF$ can be constructed by using the max-relative entropy~\eqref{max_relative_entropy} instead of the Umegaki relative entropy in~\eqref{L_relent_resource}. Namely,
\begin{align}
D^\LL_{\max}(\rho\|\omega) &\coloneqq \sup_{\Lambda\in \LL} D_{\max}\big(\Lambda(\rho)\,\big\|\, \Lambda(\omega)\big) , \label{L_max_relent_two_states} \\
D^\LL_{\max}(\rho\|\FF) &\coloneqq \inf_{\omega \in \FF} D^\LL_{\max}(\rho\|\omega)\, . \label{L_max_relent_resource}
\end{align}
Before proceeding, we also need to construct the real function
\bb
g(x)\coloneqq (1+x) \log(1+x) - x\log x\, ,
\label{g}
\ee
defined for all $x\geq 0$, with the usual convention that $0\log 0 = 0$. We now state the following general result.

\begin{prop} \label{filtered_semicontinuity_prop}
Let $\HH$ be an arbitrary (possibly infinite-dimensional) separable Hilbert space. Let $\FF\subseteq \D(\HH)$ be a set of states that is star-shaped around some $\tau\in \FF$, and let $\LL\subseteq \cptp(\HH\to \HH')$ be any set of channels. Then for all $\rho,\sigma \in \D(\HH)$ satisfying $D_{\max}(\rho\|\tau) < \infty$ and
\bb
\frac12 \|\rho - \sigma\|_\LL \leq \e
\ee
it holds that
\bb
D^\LL(\rho\|\FF) - D^\LL(\sigma\|\FF) &\leq \e D_{\max}^\LL(\rho\|\tau) + g(\e) + h_2(\e) \\
&\leq \e D_{\max}(\rho\|\tau) + g(\e) + h_2(\e)\, ,
\label{filtered_semicontinuity}
\ee
where the function $g$ is defined by~\eqref{g}.

In particular, if $\FF$ is convex, then minimising over $\tau\in \FF$ yields
\bb
D^\LL(\rho\|\FF) - D^\LL(\sigma\|\FF) \leq \e D_{\max}^\LL(\rho\|\FF) + g(\e) + h_2(\e)\, .
\label{filtered_semicontinuity_convex_set}
\ee
\end{prop}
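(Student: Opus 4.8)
The plan is to upper-bound $D^{\LL}(\rho\|\FF)$ by exhibiting a single, cleverly chosen feasible reference state $\omega\in\FF$ and then comparing $D^{\LL}(\rho\|\omega)$ against $D^{\LL}(\sigma\|\FF)$ channel-by-channel via Theorem~\ref{semicontinuity_relent_thm}. Concretely, I would fix any $\bar\omega\in\FF$ that is near-optimal for $\sigma$, say $D^{\LL}(\sigma\|\bar\omega)\leq D^{\LL}(\sigma\|\FF)+\delta$ for small $\delta>0$; if no such $\bar\omega$ with finite value exists then $D^{\LL}(\sigma\|\FF)=+\infty$ and there is nothing to prove. For a parameter $\mu\in(0,1)$ to be optimised later, the star-shape of $\FF$ around $\tau$ guarantees that $\omega\coloneqq(1-\mu)\bar\omega+\mu\tau\in\FF$, whence $D^{\LL}(\rho\|\FF)\leq D^{\LL}(\rho\|\omega)=\sup_{\Lambda\in\LL}D\big(\Lambda(\rho)\,\big\|\,\Lambda(\omega)\big)$. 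Everything then reduces to controlling, uniformly in $\Lambda$, the single-channel quantity $D(\Lambda(\rho)\|\Lambda(\omega))$. Note that $D_{\max}(\rho\|\tau)<\infty$ forces $D^{\LL}(\rho\|\tau)<\infty$ and hence $D^{\LL}(\rho\|\FF)<\infty$, so the left-hand side is well defined.

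The comparison would proceed in three elementary steps, carried out for each fixed $\Lambda\in\LL$. First, since $\Lambda(\omega)\geq\mu\,\Lambda(\tau)$, the operator inequality defining $D_{\max}$ yields $D_{\max}\big(\Lambda(\rho)\,\big\|\,\Lambda(\omega)\big)\leq D_{\max}\big(\Lambda(\rho)\,\big\|\,\Lambda(\tau)\big)+\log\tfrac1\mu\leq D^{\LL}_{\max}(\rho\|\tau)+\log\tfrac1\mu$, a bound that is crucially \emph{independent} of $\Lambda$. Second, inserting $\tfrac12\|\Lambda(\rho)-\Lambda(\sigma)\|_1\leq\tfrac12\|\rho-\sigma\|_{\LL}\leq\e$ together with this estimate into the simplified bound of Remark~\ref{simplified_semicontinuity_relent_rem} gives $D(\Lambda(\rho)\|\Lambda(\omega))\leq D(\Lambda(\sigma)\|\Lambda(\omega))+\e\big(D^{\LL}_{\max}(\rho\|\tau)+\log\tfrac1\mu\big)+h_2(\e)$. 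Third, since $\Lambda(\omega)\geq(1-\mu)\Lambda(\bar\omega)$, operator monotonicity of the logarithm gives $D(\Lambda(\sigma)\|\Lambda(\omega))\leq D\big(\Lambda(\sigma)\,\big\|\,(1-\mu)\Lambda(\bar\omega)\big)=D(\Lambda(\sigma)\|\Lambda(\bar\omega))+\log\tfrac1{1-\mu}$. Chaining the three estimates and taking the supremum over $\Lambda$ — legitimate precisely because every additive correction is $\Lambda$-independent — would produce $D^{\LL}(\rho\|\omega)\leq D^{\LL}(\sigma\|\bar\omega)+\e\,D^{\LL}_{\max}(\rho\|\tau)+\e\log\tfrac1\mu+\log\tfrac1{1-\mu}+h_2(\e)$.

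Finally I would optimise the free parameter. Letting $\delta\to0$ and minimising the $\mu$-dependent terms $\e\log\tfrac1\mu+\log\tfrac1{1-\mu}$ over $\mu\in(0,1)$, the stationarity condition gives $\mu=\tfrac{\e}{1+\e}$, at which point these two terms collapse to exactly $(1+\e)\log(1+\e)-\e\log\e=g(\e)$. This establishes the first inequality $D^{\LL}(\rho\|\FF)-D^{\LL}(\sigma\|\FF)\leq\e\,D^{\LL}_{\max}(\rho\|\tau)+g(\e)+h_2(\e)$; the second then follows immediately from the data-processing inequality $D^{\LL}_{\max}(\rho\|\tau)\leq D_{\max}(\rho\|\tau)$. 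For the concluding `in particular' statement, convexity of $\FF$ makes it star-shaped around every one of its points, so the proven bound holds with $\tau$ replaced by an arbitrary element of $\FF$, and taking the infimum over $\tau\in\FF$ turns $D^{\LL}_{\max}(\rho\|\tau)$ into $D^{\LL}_{\max}(\rho\|\FF)$.

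The conceptual heart of the argument — and the step most easily botched — is the choice $\omega=(1-\mu)\bar\omega+\mu\tau$: the admixture of the star-centre $\tau$ is what permits the per-channel $D_{\max}$ to be bounded \emph{uniformly} by $D^{\LL}_{\max}(\rho\|\tau)$ at the cost of the $\log\tfrac1\mu$ penalty, while the admixture of the near-optimal $\bar\omega$ keeps the reference close to the $\sigma$-optimiser at the cost of only a $\log\tfrac1{1-\mu}$ penalty. Balancing these two competing penalties is exactly an Alicki--Fannes--Winter-type optimisation, and it is a pleasant feature that it reproduces the function $g$ on the nose. The remaining points are pure bookkeeping: verifying that the $D_{\max}$ bound is genuinely $\Lambda$-independent before the supremum is taken, and handling the degenerate cases $\e=0$ and $D^{\LL}(\sigma\|\FF)=+\infty$. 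I expect no serious difficulty from the possibly infinite dimension of $\HH$, since Theorem~\ref{semicontinuity_relent_thm} and the elementary operator manipulations used here all remain valid on separable Hilbert spaces.
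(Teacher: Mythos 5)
Your proposal is correct and follows essentially the same route as the paper: the paper packages your mixing step $\omega=(1-\mu)\bar\omega+\mu\tau$ into a separate lemma (restricting the infimum to $\omega\geq q\tau$ at a cost of $\log\frac{1}{1-q}$), then applies the simplified bound of Remark~\ref{simplified_semicontinuity_relent_rem} with the same $\Lambda$-uniform triangle-inequality estimate on $D_{\max}$, and optimises over $q=\mu$ to recover $g(\e)$ exactly as you do. The only difference is presentational — you inline the lemma rather than state it separately.
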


As a simple corollary, we obtain a solution to Conjecture~\ref{random_private_cj} by Christandl, Ferrara, and Lancien.

\begin{cor} \label{filtered_asymptotic_continuity_cor}
Let $\FF\subseteq \D(\HH)$ be a set of states star-shaped around the maximally mixed state in dimension $D = \dim(\HH) < \infty$, and let $\LL\subseteq \cptp(\HH\to \HH')$ be any set of channels. Then for all $\rho,\sigma \in \D(\HH)$ satisfying
\bb
\frac12 \|\rho - \sigma\|_\LL \leq \e\,,
\ee
we have that
\bb
\Big| D^\LL(\rho\|\FF) - D^\LL(\sigma\|\FF)\Big| \leq \e \log D + h_2(\e) + g(\e)\, ,
\label{filtered_asymptotic_continuity}
\ee
where the function $g$ is defined by~\eqref{g}.
\end{cor}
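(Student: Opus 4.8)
The plan is to apply Proposition~\ref{filtered_semicontinuity_prop} with the center of the star-shaped set taken to be $\tau = \pi \coloneqq \id/D$, the maximally mixed state, and then to symmetrise. Since $\FF$ is star-shaped around $\pi$ by hypothesis, in particular $\pi \in \FF$, so this is a legitimate choice of center. The one nontrivial ingredient is the bound $D_{\max}(\rho\|\pi) \leq \log D$, after which everything is a direct substitution.

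First I would control the max-relative entropy against $\pi$. For any state $\rho$ on the $D$-dimensional space $\HH$ one has the operator inequality $\rho \leq \lambda_{\max}(\rho)\,\id = D\,\lambda_{\max}(\rho)\,\pi \leq D\,\pi$, where the last step uses $\lambda_{\max}(\rho) \leq 1$. By the definition of the max-relative entropy in~\eqref{max_relative_entropy}, this gives $D_{\max}(\rho\|\pi) \leq \log D$; in particular $D_{\max}(\rho\|\pi) < \infty$, so the finiteness hypothesis required to invoke Proposition~\ref{filtered_semicontinuity_prop} is automatic here because $\pi$ is full-rank and the dimension is finite. The identical estimate holds for $\sigma$.

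Next I would feed this into the second inequality of~\eqref{filtered_semicontinuity}. With $\tau = \pi$ it yields
\[
D^\LL(\rho\|\FF) - D^\LL(\sigma\|\FF) \leq \e\,D_{\max}(\rho\|\pi) + g(\e) + h_2(\e) \leq \e \log D + g(\e) + h_2(\e),
\]
which is exactly the one-sided version of~\eqref{filtered_asymptotic_continuity}. To upgrade to the absolute value I would observe that the filtered norm defined in~\eqref{L_norm} is symmetric, so $\tfrac12\|\sigma-\rho\|_\LL = \tfrac12\|\rho-\sigma\|_\LL \leq \e$, and that $D_{\max}(\sigma\|\pi) \leq \log D < \infty$ as well. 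Hence Proposition~\ref{filtered_semicontinuity_prop} applies verbatim with the roles of $\rho$ and $\sigma$ exchanged, giving $D^\LL(\sigma\|\FF) - D^\LL(\rho\|\FF) \leq \e \log D + g(\e) + h_2(\e)$. Combining the two one-sided bounds proves~\eqref{filtered_asymptotic_continuity}.

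I do not expect a genuine obstacle, since all the analytic work is already carried out in Proposition~\ref{filtered_semicontinuity_prop}; the corollary is essentially a specialisation. The only points deserving a moment of care are (i) recognising that the proposition's hypothesis $D_{\max}(\rho\|\tau) < \infty$ comes for free once $\tau = \pi$, precisely because $\pi$ is full-rank in finite dimension, and (ii) that the two-sided estimate requires invoking the (inherently one-sided) Proposition~\ref{filtered_semicontinuity_prop} twice with the arguments swapped, which is legitimate exactly because the filtered norm is symmetric and the same bound $\log D$ controls both $D_{\max}(\rho\|\pi)$ and $D_{\max}(\sigma\|\pi)$.
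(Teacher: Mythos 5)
Your argument is correct and matches the paper's proof essentially verbatim: both take $\tau = \id/D$, use $\rho \leq \id = D\tau$ to get $D_{\max}(\rho\|\tau)\leq \log D$ (and likewise for $\sigma$), and then apply~\eqref{filtered_semicontinuity} twice with $\rho$ and $\sigma$ exchanged to obtain the two-sided bound. Your added remarks on why the finiteness hypothesis is automatic and why the swap is legitimate are sound but not needed beyond what the paper states.
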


The above result is remarkable because:
\begin{itemize}
\item it solves Conjecture~\ref{random_private_cj};
\item it generalises~\cite[Theorem~11]{Schindler2023} in three ways, namely (a)~lifting the asymptotic continuity from the trace distance to $\LL$-filtered distance (see also~\cite[Proposition~3]{rel-ent-sq}), (b)~removing the assumption on the convexity of $\FF$, and (c)~removing any assumption on the set $\LL$, which doesn't even need to be made of measurements;
\item it generalises and significantly tightens~\cite[Proposition~3]{rel-ent-sq} and~\cite[Proposition~6]{random-private}, removing any assumption on the set of channels, which in particular no longer need to be (partial) measurements; finally,
\item the estimate on the right-hand side seems reasonably tight, as its form is not too different from that found in~\cite[Lemma~7]{tightuniform} (note that $g(x) = (1+x)\, h_2\big(\frac{x}{1+x}\big)$).
\end{itemize}

We will now prove Proposition~\ref{filtered_semicontinuity_prop} and then deduce from it Corollary~\ref{filtered_asymptotic_continuity_cor}. In order to prove Proposition~\ref{filtered_semicontinuity_prop} with Theorem~\ref{semicontinuity_relent_thm}, we need to show that we can restrict the optimisation over free states in~\eqref{L_relent_resource} to states with respect to which $\rho$ has a controlled max-relative entropy (see~\eqref{max_relative_entropy} for a definition). This is needed because of the $D_{\max}$ term on the right-hand side of~\eqref{semicontinuity_relent}. We fix this with a little lemma that is already essentially contained in the proof of~\cite[Proposition~3]{rel-ent-sq} (and that we re-discovered independently).

\begin{lemma} \label{restricting_sigma_lemma}
For an arbitrary (possibly infinite-dimensional) separable Hilbert space $\HH$, let $\FF\subseteq \D(\HH)$ be a set of states star-shaped around some $\tau\in \FF$,\footnote{This means that for all $\sigma\in \FF$ and all $\lambda\in [0,1]$, it holds that $\lambda\sigma + (1-\lambda)\tau\in \FF$.} 
and let $\LL\subseteq \cptp(\HH\to \HH')$ be any set of channels. Then for all $\rho\in \D(\HH)$ and all $q\in (0,1)$, it holds that
\bb
\inf_{\omega\in \FF,\ \omega\, \geq\, q \tau} D^\LL(\rho\|\omega) + \log(1-q) \leq D^\LL(\rho\|\FF) \leq \inf_{\omega\in \FF,\ \omega\, \geq\, q \tau} D^\LL(\rho\|\omega)\, ,
\ee
where $D^\LL(\rho\|\FF)$ is defined by~\eqref{L_relent_resource}.
\end{lemma}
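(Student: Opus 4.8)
The plan is to treat the two inequalities separately. The right-hand inequality is immediate from set inclusion: since $\{\omega\in\FF:\ \omega\geq q\tau\}\subseteq\FF$, taking the infimum of $D^\LL(\rho\|\cdot)$ over the smaller set can only increase its value, so $D^\LL(\rho\|\FF)\leq\inf_{\omega\in\FF,\ \omega\geq q\tau}D^\LL(\rho\|\omega)$. The entire content therefore lies in the left-hand inequality.

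For that lower bound, the idea is to show that any free state $\omega\in\FF$ can be pushed into the restricted feasible set at a cost of only $-\log(1-q)$ in the filtered relative entropy. Given $\omega\in\FF$, I would define $\omega_q\coloneqq(1-q)\omega+q\tau$. By star-shapedness $\omega_q\in\FF$, and since $(1-q)\omega\geq0$ we have $\omega_q\geq q\tau$, so $\omega_q$ is feasible for the restricted infimum. For any $\Lambda\in\LL$, linearity of the channel gives $\Lambda(\omega_q)=(1-q)\Lambda(\omega)+q\Lambda(\tau)\geq(1-q)\Lambda(\omega)$ in the L\"owner order. I then invoke two standard properties of the Umegaki relative entropy, valid on a separable Hilbert space and allowing the value $+\infty$: (i)~antitonicity in the second argument under the L\"owner order, i.e.\ $\xi_1\leq\xi_2$ implies $D(\cdot\|\xi_1)\geq D(\cdot\|\xi_2)$; and (ii)~the scaling rule $D(\cdot\|c\,\xi)=D(\cdot\|\xi)-\log c$ for $c>0$. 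Together with the L\"owner bound above these give, for each $\Lambda$,
\[
D\big(\Lambda(\rho)\,\big\|\,\Lambda(\omega_q)\big)\leq D\big(\Lambda(\rho)\,\big\|\,(1-q)\Lambda(\omega)\big)=D\big(\Lambda(\rho)\,\big\|\,\Lambda(\omega)\big)-\log(1-q).
\]
Taking the supremum over $\Lambda\in\LL$ yields $D^\LL(\rho\|\omega_q)\leq D^\LL(\rho\|\omega)-\log(1-q)$. Since $\omega_q$ is feasible, $\inf_{\omega'\in\FF,\ \omega'\geq q\tau}D^\LL(\rho\|\omega')\leq D^\LL(\rho\|\omega)-\log(1-q)$ for every $\omega\in\FF$; taking the infimum over $\omega\in\FF$ and rearranging produces exactly the claimed lower bound.

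The computation is short, so the only genuine care is in justifying the two relative-entropy facts in the infinite-dimensional setting and when the entropies may be infinite. The scaling rule~(ii) is unproblematic even when $D(\Lambda(\rho)\|\Lambda(\omega))=+\infty$, as both sides are then $+\infty$. For the antitonicity~(i) I would appeal to the standard fact that on a separable Hilbert space $D(\rho\|\xi)$ is non-increasing in $\xi$ with respect to the L\"owner order, which follows from operator monotonicity of the logarithm; this is the only delicate ingredient and the step I expect to require the most care. Note that no additional structure on $\LL$ and no finiteness of $\dim\HH$ enters, which is why the lemma holds in the full generality stated.
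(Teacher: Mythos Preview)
Your proof is correct and follows essentially the same approach as the paper: define $\omega'\coloneqq(1-q)\omega+q\tau\in\FF$ with $\omega'\geq q\tau$, then use antitonicity of the relative entropy in its second argument (via operator monotonicity of the logarithm, which the paper also cites as the key ingredient) together with the scaling rule to compare $D^\LL(\rho\|\omega')$ and $D^\LL(\rho\|\omega)$. The only cosmetic difference is that the paper writes the operator inequality as $\omega\leq\frac{1}{1-q}\omega'$ and works directly at the level of $D^\LL$, whereas you push the channel through first; these are equivalent.
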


\begin{proof}
The upper bound is trivial. As for the lower bound, for every $\omega\in \FF$ one can construct $\omega' = (1-q) \omega + q \tau$, which is again a state in $\FF$ due to the fact that this set is star-shaped around $\tau$. Note also that $\omega\leq \frac{1}{1-q}\,\omega'$ and moreover $\omega' \geq q \tau$, so that by operator monotonicity of the logarithm (or, in other words, by applying~\cite[Proposition~5.1(i)]{PETZ-ENTROPY}) we obtain
\bb
D^\LL(\rho\|\omega) \geq D^\LL\Big(\rho \,\Big\|\, \frac{\omega'}{1-q} \Big) = D^\LL(\rho \| \omega' ) + \log(1-q) \geq \inf_{\omega''\in \FF,\ \omega''\geq q\tau} D^\LL(\rho \| \omega') + \log(1-q)\, .
\ee
Taking the infimum over $\omega\in \FF$ on the left-hand side completes the proof.
\end{proof}

We are now ready to prove Proposition~\ref{filtered_semicontinuity_prop} and Corollary~\ref{filtered_asymptotic_continuity_cor}.

\begin{proof}[Proof of Proposition~\ref{filtered_semicontinuity_prop}]
Let us fix an arbitrary $q\in (0,1)$. Then
\begin{align}
D^\LL(\rho\|\FF) &\leqt{(i)} \inf_{\omega\in \FF,\ \omega\, \geq\, q \tau} D^\LL(\rho\|\omega) \nonumber \\
&= \inf_{\omega\in \FF,\ \omega\, \geq\, q \tau} \sup_{\Lambda\in \LL} D\big(\Lambda(\rho)\,\big\|\, \Lambda(\omega)\big) \nonumber \\
&\leqt{(ii)} \inf_{\omega\in \FF,\ \omega\, \geq\, q \tau} \sup_{\Lambda\in \LL} \left\{ D\big(\Lambda(\rho)\,\big\|\, \Lambda(\omega)\big) + \e D_{\max}^\LL(\rho\|\tau) + \e \log \frac1q + h_2(\e) \right\} \nonumber \\
&\leqt{(iii)} D^\LL(\sigma\|\FF) + \log\frac{1}{1-q} + \e D_{\max}^\LL(\rho\|\tau) + \e \log \frac1q + h_2(\e)\, . \nonumber
\end{align}
Here: in~(i) we restricted the infimum (the easy direction in Lemma~\ref{restricting_sigma_lemma}); in~(ii) we applied Theorem~\ref{semicontinuity_relent_thm} in the simplified form~\eqref{simplified_semicontinuity_relent}, noting that $\frac 12 \big\|\Lambda(\rho) - \Lambda(\sigma)\big\|_1 \leq \frac12 \|\rho - \sigma\|_\LL\leq \e$ and observing that 
\bb
D_{\max}\big(\Lambda(\rho)\,\big\|\, \Lambda(\omega)\big) \leq D_{\max}\big(\Lambda(\rho)\,\big\|\, \Lambda(\tau)\big) + \log \frac1q \leq D_{\max}^\LL(\rho\|\tau) + \log \frac1q
\ee
via the triangle inequality for the max-relative entropy, applicable because $\omega\geq q\tau$; finally, in~(iii) we used again Lemma~\ref{restricting_sigma_lemma} (this time the less trivial part of it).

It remains only to optimise over $q\in (0,1)$. Fortunately this can be done analytically, thanks to the simple formula
\bb
g(\e) = \inf_{0<q<1} \left\{ \e\log \frac{1}{q} + \log\frac{1}{1-q}\right\} . 
\ee
Using it, we obtain that
\bb
D^\LL(\rho\|\FF) &\leq D^\LL(\sigma \|\FF) + \e D_{\max}^\LL(\rho\|\tau) + g(\e) + h_2(\e) \\
&\leqt{(iv)} D^\LL(\sigma \|\FF) + \e D_{\max}^\LL(\rho\|\tau) + g(\e) + h_2(\e)\, ,
\ee
where~(iv) is simply by data processing. Finally, if $\FF$ is convex, and hence star-shaped around all of its states, then we can minimise over $\tau\in \FF$, thus obtaining also~\eqref{filtered_semicontinuity_convex_set} and completing the proof.
\end{proof}

\begin{proof}[Proof of Corollary~\ref{filtered_asymptotic_continuity_cor}]
It suffices to apply Proposition~\ref{filtered_semicontinuity_prop} with $\tau = \id/D$ equal to the maximally mixed state. Indeed, noting that $\rho\leq \id = D \tau$, we have that $D_{\max}(\rho\|\FF) \leq D_{\max}(\rho\|\tau) \leq \log D$, and similarly for $\sigma$. Applying~\eqref{filtered_semicontinuity} twice, the second time with $\rho$ and $\sigma$ exchanged, we obtain precisely~\eqref{filtered_asymptotic_continuity}.
\end{proof}


\subsection{Upper bound on transformation rates in infinite dimensions} \label{sec:transformation_rates}

Let $\rho = \rho_{AB}$ and $\sigma = \sigma_{A'B'}$ be two bipartite states of possibly infinite-dimensional systems $AB$ and $A'B'$. A key quantity of interest in entanglement theory is the \deff{asymptotic transformation rate} $\rho\to \sigma$, i.e.\ the maximum rate of production of copies of $\sigma$ that can be achieved by consuming copies of $\rho$ and using only local operations and classical communication (LOCC)~\cite{LOCC}. This is formally defined by
\bb
R_\locc(\rho\to\sigma) \coloneqq \sup\left\{ R:\ \lim_{n\to\infty} \inf_{\Lambda\in \locc_n} \frac12 \left\| \Lambda_n\big(\rho^{\otimes n}\big) - \sigma^{\otimes \ceil{Rn}} \right\|_1 = 0 \right\} ,
\label{rate}
\ee
where $\locc_n$ denotes the set of channels with input system $A^n\!:\!B^n$ and output system ${A'}^m\!:\!{B'}^m$, and $m\coloneqq \ceil{Rn}$. The operationally most important entanglement measures, the distillable entanglement and the entanglement cost (see also Section~\ref{sec:continuity_entanglement_cost}), can be seen as special case of the above notion, according to the formulae $E_{d,\,\locc}(\rho) \coloneqq R_\locc(\rho\to\Phi_2)$ and $E_{c,\,\locc}(\rho) \coloneqq R_\locc(\Phi_2\to \rho)^{-1}$. 

In spite of their operational importance, asymptotic transformation rates are extremely difficult to evaluate and especially estimate from above. For this reason, one is often interested in tight upper bounds on them. A general strategy to construct upper bounds is to use the \deff{regularised relative entropy of entanglement}, defined by~\cite{Vedral1997, Vedral1998}
\begin{align}
E_R^\infty(\rho) &\coloneqq \lim_{n\to\infty} \frac1n\, E_R\big(\rho^{\otimes n}\big)\, , \label{regularised_relent_entanglement} \\
E_R(\rho) &\coloneqq D\big(\rho\,\big\|\, \SEP(A\!:\!B)\big) = \min_{\omega\in \SEP(A:B)} D(\rho\|\omega)\, . \label{relent_entanglement}
\end{align}
Here, $\SEP(A\!:\!B)$ denotes the set of \deff{separable states} on the bipartite system $AB$, defined by~\cite{Werner}
\bb
\SEP(A\!:\!B) \coloneqq \cl\co\left\{\ketbra{\alpha}_A\otimes \ketbra{\beta}_B :\ \ket{\alpha}_A \in \HH_A,\ \ket{\beta}_B\in \HH_B,\ \braket{\alpha|\alpha} = 1 = \braket{\beta|\beta}\right\} ,
\ee
where $\cl \co$ represents the closed convex hull (with respect to the trace norm topology). The limit in~\eqref{regularised_relent_entanglement} exists due to Fekete's lemma~\cite{Fekete1923}, because of the sub-additivity property $E_R\big(\rho^{\otimes n}\big) \leq n E_R(\rho)$. Furthermore, the infimum in the definition of $D\big(\rho\,\big\|\, \SEP(A\!:\!B)\big)$ in~\eqref{relent_entanglement} is always achieved, also for infinite-dimensional systems~\cite[Theorem~5]{achievability}. The importance of the regularised relative entropy of entanglement stems from its operational interpretation in the context of entanglement testing, guaranteed by the recently established generalised quantum Stein's lemma~\cite{Brandao2010, gap, Hayashi-Stein, GQSL}.

If $\rho$ and $\sigma$ are finite-dimensional states, with $\sigma \notin \SEP(A\!:\!B)$ entangled, i.e.\ not separable, it is known that~\cite[Eq.~(146)]{Horodecki-review}
\bb
R_\locc(\rho\to\sigma) \leq \frac{E_R^\infty(\rho)}{E_R^\infty(\sigma)}\, .
\label{bound_rates_finite_dim}
\ee
Note that $E_R^\infty(\sigma)>0$, since $E_R^\infty$ is a faithful measure of entanglement~\cite{Brandao2010, Piani2009}. The problem with~\eqref{bound_rates_finite_dim} is that its standard proof~\cite[XV.E.2]{Horodecki-review} makes use of \emph{asymptotic continuity}, a strong form of continuity that does not apply to infinite-dimensional systems. Our main contribution in this section is to generalise~\eqref{bound_rates_finite_dim} to the infinite-dimensional setting, thus avoiding the `asymptotic continuity catastrophe' described in~\cite{nonclassicality}. The price we will have to pay is a stronger regularity assumption on the target state $\sigma$; however, crucially, no regularity assumption is made on the LOCC protocols that can be employed in~\eqref{rate} to effect the transformation. Our result is as follows.

\begin{prop} \label{bound_rates_general_prop}
Let $AB$, $A'B'$ be arbitrary bipartite quantum systems, finite or infinite dimensional. Let $\rho=\rho_{AB}$ and $\sigma=\sigma_{A'B'}$ be two states with $\sigma \notin \SEP(A'\!:\!B')$ and $E_{R,\,\max}(\sigma) < \infty$. Then
\bb
R_{\locc}(\rho\to\sigma) \leq R_{\sepp}(\rho\to\sigma) \leq \frac{E_R^\infty(\rho)}{E_R^\infty(\sigma)}\, .
\label{bound_rates_general}
\ee
\end{prop}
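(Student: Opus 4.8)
The plan is to treat the two inequalities in~\eqref{bound_rates_general} separately, the first being essentially definitional and the second being the substance. For $R_{\locc}(\rho\to\sigma)\leq R_{\sepp}(\rho\to\sigma)$, I would simply note that every LOCC channel is non-entangling, so $\locc_n\subseteq\sepp_n$ for every $n$; hence the infimum in~\eqref{rate} over $\sepp_n$ is no larger than the one over $\locc_n$, and any rate achievable by LOCC is a fortiori achievable by non-entangling operations. For the main bound $R_{\sepp}(\rho\to\sigma)\leq E_R^\infty(\rho)/E_R^\infty(\sigma)$, the classical finite-dimensional route via asymptotic continuity of $E_R$ is unavailable; instead I would feed the trace-distance gap into the semi-continuity estimate of Proposition~\ref{filtered_semicontinuity_prop}, whose $D_{\max}$ correction is exactly what the hypothesis $E_{R,\max}(\sigma)=D_{\max}(\sigma\|\SEP)<\infty$ is designed to control.

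Concretely, I would fix any achievable rate $R\in\big(0,R_{\sepp}(\rho\to\sigma)\big)$ (the case $R=0$ being trivial, since the right-hand side is nonnegative) and set $m_n:=\ceil{Rn}$. By~\eqref{rate} there are non-entangling channels $\Lambda_n$ whose outputs $\widetilde{\sigma}_n:=\Lambda_n\big(\rho^{\otimes n}\big)$ satisfy $\e_n:=\tfrac12\big\|\widetilde{\sigma}_n-\sigma^{\otimes m_n}\big\|_1\to 0$. On one side, monotonicity of the relative entropy of entanglement under non-entangling maps (data processing of $D$ together with $\Lambda_n(\SEP)\subseteq\SEP$) yields
\[
E_R\big(\widetilde{\sigma}_n\big)=E_R\big(\Lambda_n(\rho^{\otimes n})\big)\leq E_R\big(\rho^{\otimes n}\big).
\]
On the other side, I would apply Proposition~\ref{filtered_semicontinuity_prop} with the trivial filtering $\LL=\{\Id\}$ (so that $\|\cdot\|_\LL=\|\cdot\|_1$ and $D^\LL=D$) and $\FF=\SEP$, taking first argument $\sigma^{\otimes m_n}$ and second argument $\widetilde{\sigma}_n$. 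Choosing as star-centre $\tau=\omega^{\otimes m_n}$, where $\omega\in\SEP$ satisfies $D_{\max}(\sigma\|\omega)<\infty$ — such $\omega$ exists precisely because $E_{R,\max}(\sigma)<\infty$ — and using $D_{\max}\big(\sigma^{\otimes m_n}\big\|\omega^{\otimes m_n}\big)\leq m_n\, D_{\max}(\sigma\|\omega)$, I obtain from~\eqref{filtered_semicontinuity} that
\[
E_R\big(\sigma^{\otimes m_n}\big)-E_R\big(\widetilde{\sigma}_n\big)\leq \e_n\, m_n\, D_{\max}(\sigma\|\omega)+g(\e_n)+h_2(\e_n).
\]

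Combining the two displays and dividing by $n$ gives
\[
\tfrac1n E_R\big(\sigma^{\otimes m_n}\big)\leq \tfrac1n E_R\big(\rho^{\otimes n}\big)+\e_n\,\tfrac{m_n}{n}\,D_{\max}(\sigma\|\omega)+\tfrac1n\big(g(\e_n)+h_2(\e_n)\big).
\]
Letting $n\to\infty$, one has $m_n/n\to R$, so the left-hand side tends to $R\,E_R^\infty(\sigma)$ by~\eqref{regularised_relent_entanglement}, while $\tfrac1n E_R\big(\rho^{\otimes n}\big)\to E_R^\infty(\rho)$; the correction term vanishes because $\e_n\to0$ while $D_{\max}(\sigma\|\omega)$ is a finite constant, and $\tfrac1n\big(g(\e_n)+h_2(\e_n)\big)\to0$. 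Thus $R\,E_R^\infty(\sigma)\leq E_R^\infty(\rho)$, and since $\sigma\notin\SEP$ is entangled we have $E_R^\infty(\sigma)>0$, so $R\leq E_R^\infty(\rho)/E_R^\infty(\sigma)$. Taking $R\uparrow R_{\sepp}(\rho\to\sigma)$ then proves the claim.

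The step that requires care — and where the finiteness assumption $E_{R,\max}(\sigma)<\infty$ is indispensable — is the vanishing of the $D_{\max}$ correction. In finite dimensions this quantity is automatically bounded by $\log\dim$, and the argument collapses to ordinary asymptotic continuity; in infinite dimensions I instead rely on the fact that the semi-continuity bound charges only $\e_n\, D_{\max}(\sigma\|\omega)$, which by subadditivity of the max-relative entropy of entanglement grows at most linearly in $m_n$ and is therefore suppressed by the factor $\e_n\to0$ after normalising by $n$. I expect the main obstacle to lie precisely in justifying these infinite-dimensional ingredients cleanly — the subadditivity $D_{\max}\big(\sigma^{\otimes m}\big\|\omega^{\otimes m}\big)\leq m\,D_{\max}(\sigma\|\omega)$, the monotonicity of $E_R$ under non-entangling channels in the separable-Hilbert-space setting, and the strict positivity $E_R^\infty(\sigma)>0$ for entangled $\sigma$ — after which the limiting calculation is routine.
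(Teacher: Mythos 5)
Your proposal is correct and follows essentially the same route as the paper: reduce to non-entangling operations, use monotonicity of $E_R$ under NE maps, apply the semi-continuity bound of Proposition~\ref{filtered_semicontinuity_prop} to compare $E_R\big(\sigma^{\otimes\ceil{Rn}}\big)$ with $E_R\big(\Lambda_n(\rho^{\otimes n})\big)$, and conclude via the linear-in-$m_n$ growth of the $D_{\max}$ correction and the faithfulness of $E_R^\infty$. The only cosmetic difference is that you invoke the star-shaped form~\eqref{filtered_semicontinuity} with the explicit product centre $\tau=\omega^{\otimes m_n}$ and the additivity of $D_{\max}$, whereas the paper uses the convex form~\eqref{filtered_semicontinuity_convex_set} followed by the sub-additivity~\eqref{subadditivity_E_R_max} of $E_{R,\max}$ --- the same estimate unpacked differently.
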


In the above statement, $E_{R,\,\max}$ denotes the max-relative entropy version of~\eqref{relent_entanglement}, i.e.
\bb
E_{R,\,\max}(\rho) \coloneqq \min_{\omega\in \SEP(A:B)} D_{\max}(\rho\|\omega)\, ,
\ee
also called the \deff{logarithmic robustness of entanglement}~\cite{VidalTarrach, Steiner2003, Datta-alias}. As for the standard relative entropy of entanglement, also in this case we have the sub-additivity property
\bb
E_{R,\,\max}\big(\rho^{\otimes n}\big) \leq n\, E_{R,\,\max}(\rho)\, .
\label{subadditivity_E_R_max}
\ee
Furthermore, in~\eqref{bound_rates_general} we denoted with $R_{\sepp}(\rho\to\sigma)$ the asymptotic transformation rate under \deff{non-entangling (NE) operations}~\cite{BrandaoPlenio1, BrandaoPlenio2, gap, irreversibility}, obtained from~\eqref{rate} by replacing LOCC with the set of all channels that map separable states to separable states.

\begin{proof}[Proof of Proposition~\ref{bound_rates_general_prop}]
Since the set of LOCC operations is a subset of that of NE operations, it is immediate to see that
\bb
R_{\locc}(\rho\to \sigma) \leq R_{\sepp}(\rho\to \sigma)\, .
\ee
Thus, it suffices to prove the second inequality in~\eqref{bound_rates_general}. To this end, let $R$ be an achievable rate, i.e.\ let it belong to the set in~\eqref{rate}, with LOCC replaced by NE. Then there exists a sequence of NE operations $\Lambda_n$ such that the states $\omega_n \coloneqq \Lambda_n\big(\rho^{\otimes n}\big)$ satisfy $\e_n \coloneqq \frac12 \left\| \omega_n - \sigma^{\otimes \ceil{Rn}} \right\|_1 \ctends{}{n\to\infty}{1.5pt} 0$. We can now write that
\bb
E_R\big(\sigma^{\otimes \ceil{Rn}}\big) - E_R\big(\rho^{\otimes n}\big) &\leqt{(i)} E_R\big(\sigma^{\otimes \ceil{Rn}}\big) - E_R(\omega_n) \\
&\leqt{(ii)} \e_n\, E_{R,\,\max}\big(\sigma^{\otimes \ceil{Rn}}\big) + g(\e_n) + h_2(\e_n) \\
&\leqt{(iii)} \ceil{Rn} \e_n\, E_{R,\,\max}(\sigma) + g(\e_n) + h_2(\e_n)\, .
\label{bound_rates_general_proof_key_inequalities}
\ee
The justification of the above inequalities is as follows. In~(i) we simply observed that due to monotonicity under NE operations~\cite[Lemma~IV.1]{BrandaoPlenio2} it holds that $E_R\big(\rho^{\otimes n}\big) \geq E_R\big(\Lambda_n\big(\rho^{\otimes n}\big)\big) = E_R(\omega_n)$. In~(ii) we applied Proposition~\ref{filtered_semicontinuity_prop}, and in particular~\eqref{filtered_semicontinuity_convex_set}, with $\FF = \SEP\big({A'}^{\ceil{Rn}}\!:\!{B'}^{\ceil{Rn}}\big)$ and $\LL = \cptp\big(\HH_{AB} \to \HH_{AB}\big)$, so that $D^\LL(\cdot \| \FF) = E_R(\cdot)$ and $D^\LL_{\max} (\cdot \| \FF) = E_{R,\,\max}(\cdot)$. Finally, in~(iii) we employed~\eqref{subadditivity_E_R_max}.

Diving both sides of~\eqref{bound_rates_general_proof_key_inequalities} by $n$ and taking the limit $n\to\infty$ yields immediately $R E_R^\infty(\sigma) - E_R^\infty(\rho) \leq 0$. The claim follows by taking the supremum over $R$ and dividing by $E_R^\infty(\sigma)$, which is strictly positive due to the faithfulness of $E_R^\infty$, whose proof in~\cite{Piani2009} works equally well for finite- and infinite-dimensional systems.
\end{proof}


\section{Conclusion}

Further applications in quantum information are waiting to be explored, but the main open question remains Conjecture~\ref{cj:wilde-conjecture} about the conditional entropy for general bipartite quantum states. Whereas many of applications do seem to have fixed marginals, one important question we have to leave open concerns tight continuity bounds for quantum conditional mutual information, which would improve on~\cite{Alicki-Fannes, Synak2006, mosonyi11, Winter2016}. A strictly related question is that of establishing tight continuity bounds for the squashed entanglement~\cite{Tucci1999, squashed}, improving on the original one by Alicki--Fannes~\cite{Alicki-Fannes}.

Another special case that is particularly relevant for improving the known continuity bounds on capacities of quantum channels (see~\cite{Leung2009,Shirokov17}) would be a version of Theorem~\ref{equal_marginals_thm} in terms of quantum mutual information, where one marginal of the bipartite states is equal (but not the other one). This is not covered by Theorem~\ref{semicontinuity_relent_thm} and our proof techniques, but one might nevertheless conjecture even more generally that for states $\rho_{AB},\sigma_{AB}$ and with trace distance $\frac12 \|\rho_{AB} - \sigma_{AB}\|_1 \leq \e$ small enough, one has something like
\bb
\label{eq:mutual-information-conj}
\left| I(A:B)_\rho - I(A:B)_\sigma \right| \leqt{?} \e \log \Big( \min\left\{|A|^2,|B|^2\right\} - 1\Big) + h_2(\e)
\ee
with the quantum mutual information $I(A:B)_\rho\coloneqq H(A)_\rho+H(B)_\rho-H(AB)_\rho$. In fact, to the best of our knowledge, this question seems to be open even in the classical case.\footnote{We have not intensively tested the above mutual information inequality numerically.}

\begin{note}
Our result on the uniform continuity bound for the conditional entropy in the case of equal marginals (Theorem~\ref{equal_marginals_thm}) was obtained with different techniques by Audenaert, Bergh, Datta, Jabbour, Capel, and Gondolf.
\end{note}


\paragraph*{Acknowledgments:}

MB acknowledges funding by the European Research Council (ERC Grant Agreement No.\ 948139) and support from the Excellence Cluster -- Matter and Light for Quantum Computing (ML4Q). MT is supported by the National Research Foundation, Singapore and A*STAR under its CQT Bridging Grant. We thank Maksim E.\ Shirokov for feedback on a previous version of our work and for pointing us to his work~\cite{Shirokov17}. LL and MB thank Nilanjana Datta and Micheal Jabbour for discussions on our preliminary results in Cambridge and Budapest. We thank Mohammad A.\ Alhejji, Gereon Kossmann, and Roberto Rubboli for various discussions about approaches to extend Theorem~\ref{equal_marginals_thm} to general quantum states (as stated in Conjecture~\ref{cj:wilde-conjecture} taken from~\cite{Mark2020}). LL is also grateful to Francesco Anna Mele for enlightening discussions on approximate degradability.

\bibliography{biblio}

\end{document}